  \documentclass[11pt, draftclsnofoot, journal, onecolumn]{IEEEtran}
	\usepackage{setspace}
  \usepackage[dvips]{graphicx}
  \usepackage{amsmath, amsthm, amssymb, amsfonts, mathtools}
  \usepackage{cases, multirow}
  \usepackage{srcltx,color}
  \usepackage{cite, url}
  \usepackage[colorinlistoftodos,prependcaption,textsize=tiny]{todonotes}
  \usepackage{algorithm, algpseudocode, cool}
  \usepackage{xcolor}

\usepackage{subfigure}



  \long\def\comment#1{}

  \newtheorem{thm}{Theorem}
  \newtheorem{lemm}{Lemma}
  
  \newtheorem{assumption}{Assumption}

  \newtheorem{remark}{Remark}

  \def\figref#1{Fig.~\ref{#1}}
  \def\be{\begin{equation} }
  \def\ee{\end{equation} }

  \newcommand\mycom[2]{\genfrac{}{}{0pt}{}{#1}{#2}}
  \newlength{\myimagewidth}

  \title{A Stochastic Geometric Analysis of Device-to-Device Communications Operating over Generalized Fading Channels}

  \begin{document}
  \setlength{\myimagewidth}{\dimexpr\linewidth/2-1em\relax}

  \author{
     \IEEEauthorblockN{Young Jin Chun, Simon L. Cotton, Harpreet S. Dhillon, Ali Ghrayeb, and Mazen O. Hasna}
     \thanks{Y. J. Chun and S. L. Cotton are with the Wireless Communications Laboratory, ECIT Institute, Queens University Belfast, United Kingdom.
     H. S. Dhillon is with Wireless@VT, Department of Electrical and Computer Engineering, Virginia Tech, Blacksburg, VA, USA.
     A. Ghrayeb is with Department of Electrical and Computer Engineering, Texas A\&M University at Qatar, Doha, Qatar. M. O. Hasna is with Department of Electrical Engineering, Qatar University, Doha, Qatar.
     (Email: Y.Chun@qub.ac.uk, simon.cotton@qub.ac.uk, hdhillon@vt.edu, ali.ghrayeb@qatar.tamu.edu, hasna@qu.edu.qa).}
     \thanks{This work has been submitted to the IEEE for possible publication. Copyright may be transferred without notice, after which this version may no longer be accessible.}
   }

  \maketitle

  \begin{abstract}
	\begin{spacing}{1.2}
  Device-to-device (D2D) communications are now considered as an integral part of future 5G networks which will enable direct communication between user equipment (UE) without unnecessary routing via the network infrastructure. This architecture will result in higher throughputs than conventional cellular networks, but with the increased potential for co-channel interference induced by randomly located cellular and D2D UEs. The physical channels which constitute D2D communications can be expected to be complex in nature, experiencing both line-of-sight (LOS) and non-LOS (NLOS) conditions across closely located D2D pairs. As well as this, given the diverse range of operating environments, they may also be subject to clustering of the scattered multipath contribution, \textit{i.e.}, propagation characteristics which are quite dissimilar to conventional Rayeligh fading environments. To address these challenges, we consider two recently proposed generalized fading models, namely $\kappa-\mu$ and $\eta-\mu$, to characterize the fading behavior in D2D communications. Together, these models encompass many of the most widely encountered and utilized fading models in the literature such as Rayleigh, Rice (Nakagami-$n$), Nakagami-$m$, Hoyt (Nakagami-$q$) and One-Sided Gaussian. Using stochastic geometry we evaluate the rate and bit error probability of D2D networks under generalized fading conditions. Based on the analytical results, we present new insights into the trade-offs between the reliability, rate, and mode selection under realistic operating conditions. Our results suggest that D2D mode achieves higher rates over cellular link at the expense of a higher bit error probability. Through numerical evaluations, we also investigate the performance gains of D2D networks and demonstrate their superiority over traditional cellular networks.
  \end{spacing} 
\end{abstract}


  \begin{IEEEkeywords}
     Device-to-device network, $\eta-\mu$ fading, $\kappa-\mu$ fading, rate-reliability trade-off, stochastic geometry.
  \end{IEEEkeywords}

  \section{Introduction}

  \subsection{Related Works}

  The recent unprecedented growth in mobile traffic has compelled the telecommunications industry to come up with new and innovative ways to improve cellular network performance to meet the ever increasing data demands. This has led to the introduction of the fifth generation (5G) of networks which are expected to provide $1000$ fold gains in capacity while achieving latencies of less than $1$ millisecond \cite{Nokia2014}. Device-to-device communications are a strong contender for 5G networks \cite{Tehrani2014} that allow direct communication between user equipments (UEs) without unnecessary routing of traffic through the network infrastructure, resulting in shorter transmission distances and improved data rates than the traditional cellular networks \cite{Asadi2014}.

  Currently, D2D is standardized by the 3rd Generation Partnership Project (3GPP) in LTE Release $12$ to provide proximity based services and public safety applications \cite{3Gpp2013}. In parallel to the standardization efforts, D2D communications have been actively studied by the research community. For example, in \cite{Lin2000}, the authors have proposed D2D as a multi-hop scheme, while in \cite{Kaufman2008, Doppler2009}, the work conducted in \cite{Lin2000} has been extended to prove that D2D communications can improve spectral efficiency and the coverage of conventional cellular networks. Additionally, D2D has also been applied to multi-cast scenarios \cite{Zhou2013}, machine-to-machine (M2M) communications \cite{Lien2011}, and cellular off-loading \cite{Bao2010}.

  While D2D communications offer many advantages, they also come with numerous challenges. These include the difficulties in accurately modeling the interference induced by cellular and D2D UEs, and consequently optimizing the resource allocation based on the interference model. Most of the previous works published in this area have relied on system-level simulations with a large parameter set \cite{Mustafa2014}, meaning that it is difficult to draw general conclusions. Recently, stochastic geometry has received considerable attention as a useful mathematical tool for interference modeling. Specifically, stochastic geometry treats the locations of the interferer as points distributed according to a spatial point process \cite{Haenggi2013}. Such an approach captures the topological randomness in the network geometry, offers high analytical flexibility and achieves accurate performance evaluation \cite{Andrews2010, Andrews2011, Jo2012, Dhillon2012, Chun2015}.

  Much work has also been done on evaluating the performance of D2D networks over Rayleigh fading channels. In \cite{Lin2014}, the authors have compared two D2D spectrum sharing schemes (overlay and underlay) and evaluated the average achievable rate for each scheme based on a stochastic geometric framework. In \cite{George2014}, the authors extended the work conducted in \cite{Lin2014} by considering a D2D link whose length depends on the user density. Flexible mode selections have also been considered, in \cite{Elsawy2014} a novel strategy is proposed which makes use of truncated channel inversion based power control for underlaid D2D networks. Notwithstanding these advances, limited work has been conducted to consider D2D networks with general fading channels, for example in \cite{Peng2014}, the authors have considered underlaid D2D networks over Rician fading channels and evaluated the success probability and average achievable rate.

  \subsection{Motivation and Contributions}

  In 5G networks and especially for D2D communications, fading environments will range from homogeneous and circularly symmetric through to non-homogeneous and non-circularly symmetric. For example, the METIS project has already demonstrated that the physical channel of 5G networks can be inhomogeneous with clusters of non-circularly symmetric scattered waves \cite{Medbo2014}. Clearly in this case, the assumption of traditional, homogeneous, linear and single cluster fading models such as Rayleigh  will no longer be sufficient and we must look towards other more general and realistic models such as $\kappa-\mu$ \cite{Yacoub2007a, DaCosta2008, 6953146} and $\eta-\mu$ \cite{Yacoub2007a, 4162456}. Influenced by this, we consider the $\kappa-\mu$ fading model which accounts for homogeneous, linear environments with line-of-sight (LOS) components and multiple clusters of scattered signal contributions, while the $\eta-\mu$ fading model represents inhomogeneous, linear environments with non-line-of-sight (NLOS) conditions and multiple clusters of scattered signal contributions.

  As discussed earlier, most of the existing published work in stochastic geometry for general wireless networks has been focused on Rayleigh fading environments, owing to its tractability and favorable analytical characteristics. The SINR distributions for general fading environments require evaluating the sum-products of aggregate interference where several approaches have been proposed to facilitate the derivation. One approach is the conversion method, which is utilized in \cite{Blaszczyszyn2013,Keeler2013, Madhusudhanan2014a, Dhillon2014, Zhang2014}, that treats the channel randomness as a perturbation in the location of the transmitter and then used the displacement theorem to transform the original network with general fading into an equivalent network without fading. The conversion method can incorporate an arbitrary fading distribution, but is not applicable when there is an exclusion zone in the interference field. An alternative approach is to express the interference functionals as an infinite series \cite{Peng2014,Tanbourgi2014}. The series representation method can be applied to an arbitrary network model, nonetheless, it may be difficult (or impossible) to analytically find the closed form expression for the high order derivatives of the Laplace transform.

  Motivated by these approaches and their limitations, we propose a stochastic geometric framework to facilitate the performance evaluation of D2D networks over generalized fading channels; \textit{i.e.}, $\kappa-\mu$ and $\eta-\mu$. We consider a D2D network overlaid upon a cellular network where the spatial locations of the mobile UEs as well as the base stations (BSs) are modeled as Poisson point process (PPP). The proposed framework can evaluate the average of an arbitrary function of the SINR over generalized fading channels, thereby enabling the estimation of the average rate, outage probability, and bit error probability.
  
  The main contributions of this paper may be summarized as follows.
  \begin{enumerate}
  	  \item We characterize the distribution of the interference experienced by cellular and D2D UEs for various generalized fading environments, namely, (i) $\kappa-\mu$ and (ii) $\eta-\mu$ fading. It is worth highlighting that these two models together encompass all of the most popular fading models proposed in the literature, including Rayleigh, Rice (Nakagami-$n$), Nakagami-$m$, Hoyt (Nakagami-$q$), and One-Sided Gaussian to name but a few.

  	  \item We also introduce a novel stochastic geometric approach for evaluating the performance of D2D networks over generalized fading channels. This approach enable us to evaluate the average of an arbitrary function of the SINR as a closed form expression.

  	  \item We invoke the proposed stochastic geometric approach to evaluate the average rate and bit error probability of D2D networks and compare that to the performance of conventional cellular networks. Furthermore, we also study the trade-off among a number of performance metrics including reliability, rate, and mode selection in overlaid D2D networks, which can provide invaluable insights that may be used to optimize the network design.
  \end{enumerate}

  The remainder of this paper is organized as follows. In Section II, we describe the system model and the fading models that will be used in this study. We introduce the interference of cellular and D2D networks in Section III, then propose a novel stochastic geometric approach for analyzing the D2D network performance in Section IV. Using the proposed approach, we evaluate the average rate and the bit error probability of D2D networks and present numerical results in Section V. Finally, Section VI concludes the paper with some closing remarks.

  \section{System Model}

  \subsection{Network Model}

  We consider a D2D network overlaid upon an uplink cellular network where a UE can directly communicate with other UEs without relying on the cellular infrastructure if a certain criterion is met. In overlay D2D networks, the cellular and D2D transmitters use orthogonal time/frequency resources by dividing the uplink spectrum into two non-overlapping portions. A spectrum partition factor $\beta$ is assigned for D2D communications and the remaining $1-\beta$ is allocated for cellular communications, where $0 \leq \beta \leq 1$. Overlaid spectrum access completely excludes cross-mode interference between cellular and D2D UEs, achieving a reliable link quality at the cost of lower spectrum utilization. 
   \begin{figure}[!t]
  	    \centering
  	        \includegraphics[width=0.5\linewidth]{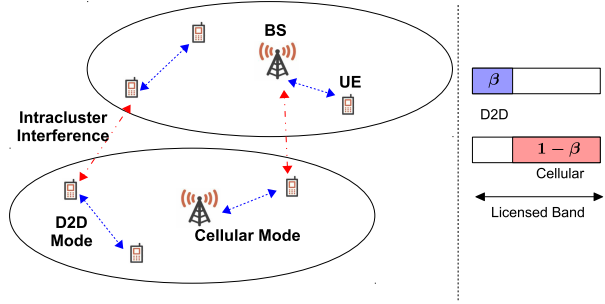}
  	    \caption{System Model and Overlay D2D Spectrum Sharing.}
  	    \label{fig.system1}
  \end{figure}

  \figref{fig.system1} depicts a high level overview of the system model where the locations of the nodes are modeled as homogeneous PPP in $\mathbb{R}^2$. The macro-cell BSs are uniformly distributed as PPP $\Psi$ with intensity $\lambda_b$ and the UEs are randomly deployed according to PPP $\Phi = \{ X_i \}$ with intensity $\lambda$, where $\Psi$ and $\Phi$ are independent point processes and $X_i$ denotes both the node and the coordinates of the $i$-th UE. Without loss of generality, we assume that the typical receiver is located at the origin, whether it is the cellular BS or D2D receiver. Based on the PPP $\Phi$, we define a marked PPP $\tilde{\Phi}$ as follows
  \begin{equation}
  	  \begin{split}
  	    \tilde{\Phi} = \{ X_i, \varrho_i, L_i, P_i\},
  	  \end{split}
  	  \label{eq_secII-1.001}
  \end{equation}
  where $\{ L_i \}$ and $\{ P_i \}$ denote the length of the link and transmit power of the $i$-th UE, respectively. $\{ \varrho_i \}$ is an indicator parameter for the types of the $i$-th UE which may be a potential D2D UE with probability $q = P(\varrho_i = 1)$, or a cellular UE with probability $1- q = P(\varrho_i = 0)$, where $q \in [0, 1]$.

  The received power $W$ from the $i$-th UE is $W = P_i L_i^{-\tau_i} G_i$, where $P_i$, $L_i$, $\tau_i > 2$,  and $G_i$ denote the transmit power, the link length, the path-loss exponent, and the small scale fading, respectively. The subscript $c$ and $d$ indicate the parameter for a cellular UE and potential D2D UE, respectively. In this study, we have isolated and focused on studying the impact of the small scale fading upon the system model proposed here. Nonetheless, it is worth highlighting that the model may be readily adapted to include shadowing using the same approach as utilized in \cite[Lemma 1]{Dhillon2014}.

  For the power control, we assume channel inversion over each link, so that the average received power is constant,  \textit{i.e.}, $P_i = {L_i}^{\tau_i}$, then $\mathbb{E}\left[W\right] = \mathbb{E}\left[G_i\right] = \bar{w}$. It is assumed that the channel coefficients $\{ G_i \}$ are independent of one another. Under this assumption, the received signal at the origin is written as
    \begin{equation}
    \begin{split}
      Y(t) = \sqrt{P_i L_i^{-\tau_i} G_i} S_i(t) + \sum_{j \in \mathcal{M}} \sqrt{P_j D_j^{-\tau_j} G_j} S_j(t) + Z(t),
    \end{split}
    \label{eq_secIII-1.001}
    \end{equation}
  where $t$ is the time index, $\mathcal{M}$ denotes the set of interfering UEs, $S_i(t)$ is the unit power signal over the intended link, $Z(t)$ is an additive white Gaussian noise process with noise power $N_0$, and the subscripts $i$ and $j$ represent the intended and interfering links, respectively. $P_j$, $D_j$, $\tau_j$, $G_j$ and $S_j(t)$ represent the transmit power, link length, path-loss exponent, small scale fading and unit power signal over the $j$-th interference link.
  Due to the power control and (\ref{eq_secIII-1.001}), the received SINR is given by
    \begin{equation}
    \begin{split}
      \mathrm{SINR} = \frac{W}{I + N_0} = \frac{G_i}{\sum_{j \in \mathcal{M}} P_j D_j^{-\tau_j} G_j + N_0}.
    \end{split}
    \label{eq_secIII-1.002}
    \end{equation}

  \subsection{Distributions of the Link Length}

  Based on the PPP assumption, the potential D2D UEs are randomly scattered over $\mathbb{R}^2$ and the internode distance between the closest D2D transmitter and receiver pair is governed by the Rayleigh distribution with a given D2D distance parameter $\xi > 0$ as follows \cite{Haenggi2005}
  \begin{equation}
    \begin{split}
      f_{L_d}(x) = 2 \pi \xi x \mathrm{e}^{-\xi \pi x^2}, \quad x \geq 0.
    \end{split}
    \label{eq_secII-1.002}
  \end{equation}

  For the cellular link, we use the approximate cellular uplink model proposed in \cite{Lin2014} where the coverage region of a macro-cell is approximated by a circular disk $\mathcal{A} = \mathcal{B}(0, R)$ with radius $R = \sqrt{\frac{1}{\pi \lambda_b}}$ and the active cellular transmitter is uniformly distributed in the cell range $\mathcal{A}$, where $\mathcal{B}(x, r)$ denotes a ball centered at $x$ with radius $r$. Consequently, the link distance between the cellular UE and the associated BS is given by
  \begin{equation}
    \begin{split}
      f_{L_c}(x) = \frac{2x}{R^2} = 2 \pi x \lambda_b, \quad 0 \leq x \leq R.
    \end{split}
    \label{eq_secII-1.003}
  \end{equation}
  The cellular links adopt orthogonal multiple access implying that only one uplink transmitter is active within each cell at a given time, whereas the medium access scheme for D2D is ALOHA with transmit probability $\varepsilon$ on each time slot, where $0 \leq \varepsilon \leq 1$.

  \subsection{Mode Selection and UE Classification}

    Each UE node $X_i \in \tilde{\Phi}$ in (\ref{eq_secII-1.001}) has an inherent type as indicated by the UE parameter $\varrho_i$, \textit{i.e.}, cellular UE ($\varrho_i = 0$) and potential D2D UE ($\varrho_i = 1$). The cellular UE always connects to the macro-cell BS, whereas the potential D2D UE may use either cellular or D2D mode depending on the associated receiver type. If the associated receiver is D2D, the potential D2D UE transmits in D2D mode. If the UE connects to a cellular BS, the potential D2D UE works in cellular mode.

    In this paper, we assume a distance-based mode selection scheme. That is, a potential D2D UE chooses the D2D mode if $L_d \leq \theta$, \textit{i.e.}, the D2D link length is not greater than a predefined mode selection threshold $\theta$, where the probability that a potential D2D UE chooses the D2D mode is given by
  \begin{equation}
    \begin{split}
     \mathrm{P}\left( L_d \leq \theta \right) = 1 - \mathrm{e}^{-\xi \pi \theta^2}.
    \end{split}
    \label{eq_secII-3.001-ext1}
  \end{equation}    
	Otherwise, cellular mode is selected.

Due to the mode selection, $\tilde{\Phi}$ can be divided into two non-overlapping point processes (PP) as follows
\begin{equation}
  \begin{alignedat}{3}
    &\text{UEs operating in cellular mode}:~   &&\Phi_c   &&~\text{with intensity } \lambda_c = \left[ (1-q) + q \mathrm{P}(L_d > \theta)\right] \lambda ,\\
    &\text{UEs operating in D2D mode}: \quad   &&\Phi_d   &&~\text{with intensity } \lambda_d = q \mathrm{P}(L_d \leq \theta) \lambda. 
    \label{eq_secII-3.002}
  \end{alignedat}
\end{equation}
However, the location of the interfering UEs in the cellular mode follows a location dependent thinning process due to the orthogonal multiple access, which makes the analysis intractable. Furthermore, the transmit power of the cellular UEs are correlated with the path-loss due to the power control policy. To achieve analytical tractability, we make the following assumptions. The accuracy of Assumptions 1-3 are validated through Section V. 

\begin{assumption}
The set of UEs operating in the D2D mode $\Phi_d$ constitute a PPP.	
\end{assumption}

\begin{assumption}
The set of UEs operating in the cellular mode $\Phi_c$ is approximated by a PPP $\hat{\Phi}_c$, where the active interfering UEs outside the cell coverage region $\mathcal{A}^c$ is distributed by a PPP $\Phi_{c, a}$ with intensity $\lambda_b = \frac{1}{\pi R^2}$ and the active cellular transmitter inside the coverage area $\mathcal{A}$ is uniformly distributed in a circular disk $\mathcal{B}(0, R)$ as (\ref{eq_secII-1.003})\footnote{
There are other ways to approximate cellular uplink. For instance, \cite{Singh2015} modeled the cellular uplink by an inhomogeneous PPP with appropriately thinned intensity.}.
\end{assumption}

\begin{assumption}
	$\Phi_d$ and $\hat{\Phi}_c$ are independent. 
\end{assumption}

  \subsection{Radio Channel Model}

  The physical channels of D2D, heterogeneous, or 5G networks are often characterized as inhomogeneous environments with clusters of  scattered waves \cite{Medbo2014}. For example, strong line-of-sight (LOS) components, correlated in-phase and quadrature scattered waves with unequal-power, and non-circular symmetry are frequently observed in the physical channel of 5G networks \cite{6953146}. Therefore, to evaluate the transmission performance over realistic radio channels for D2D networks, we adopt two general fading distributions which together can model both homogeneous and inhomogeneous radio environments.

    \subsubsection{$\kappa-\mu$ distribution}

    The $\kappa-\mu$ distribution represents the small-scale variation of the fading signal under LOS conditions, propagated through a homogeneous, linear, circularly symmetric environment \cite{Yacoub2007a, DaCosta2008, 6953146}. The $\kappa-\mu$ distribution is a general fading distribution that includes Rayleigh, Rician, Nakagami-\textit{m}, and One-sided Gaussian as special cases (See Table \ref{tab_channel1}).

    \begin{table}[!t]
    \centering
    \caption{Special Cases of the $\kappa-\mu$ and $\eta-\mu$ Fading Models.}
    \includegraphics[width=0.7\linewidth, height = 50mm]{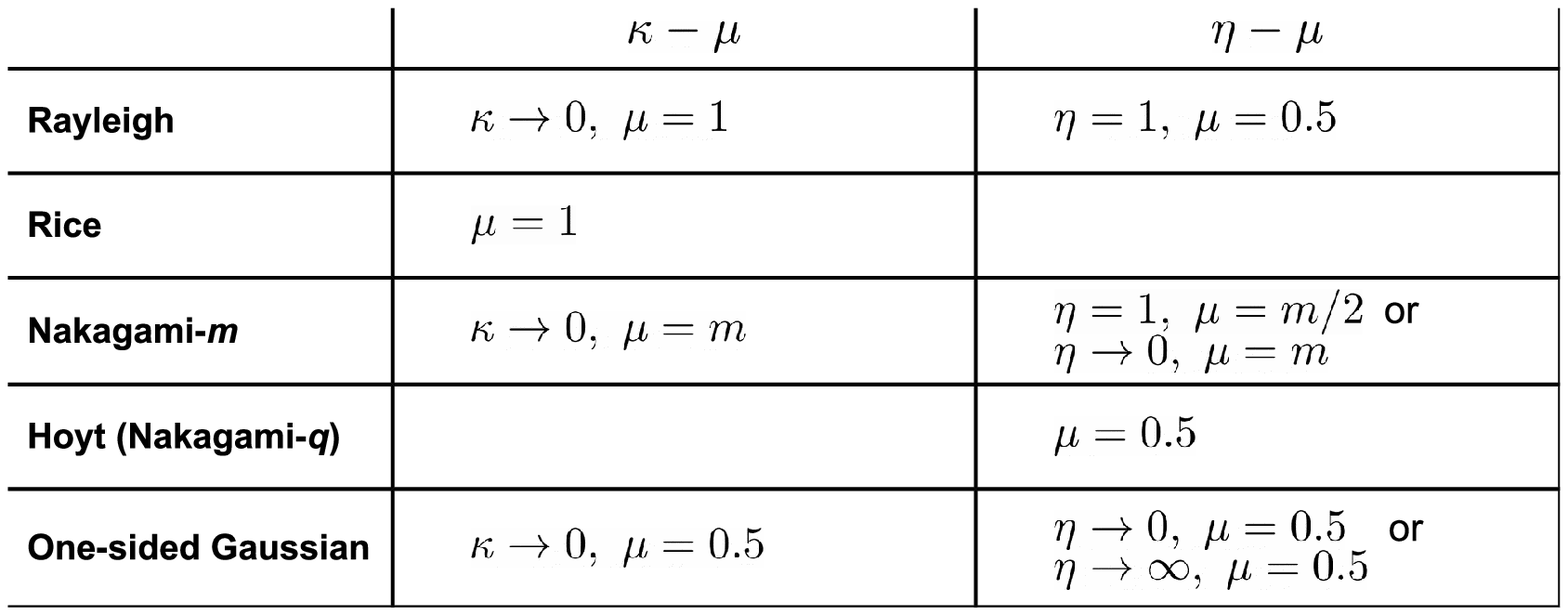}
    \label{tab_channel1}
  \end{table}

    The received signal envelope in a $\kappa-\mu$ fading channel consists of clusters of multi-path waves, where the signal within each cluster has an elective dominant component and scattered waves with identical powers. Following from this, the envelope $\sqrt{G}$ of a $\kappa-\mu$ fading signal can be written as
    \begin{equation}
  	  \begin{split}
  	    G = \sum_{j=1}^{n}\left( x_j^2 + y_j^2 \right),
  	  \end{split}
  	  \label{eq_secII-2.001}
  	  \end{equation}
  	  where $G$ is the small scale fading coefficient, $n$ is the number of multi-path clusters, and $x_j$ and $y_j$ are mutually independent Gaussian random variables with
  	  \begin{equation}
  	  \begin{split}
  	  \mathbb{E}(x_j) &= p_j, ~ \mathbb{E}(y_j) = q_j, \quad
  	  \mathbb{E}(x_j^2) = \mathbb{E}(y_j^2) = \sigma^2, \quad
  	  d^2 = \sum_{j=1}^{n}(p_j^2 + q_j^2).
  	  \end{split}
  	  \label{eq_secII-2.002}
    \end{equation}
  The physical meaning of $\kappa$ and $\mu$ parameters can be interpreted by (\ref{eq_secII-2.001}) and (\ref{eq_secII-2.002}): $\kappa = \frac{d^2}{2n\sigma^2}$ represents the ratio between the total power of the dominant components and the total power of the scattered waves, whereas $\mu$ is the real valued extension of $n$, \textit{i.e.} the number of multi-path clusters\footnote{ Note that $\mu$ as with $n$ is initially assumed to be a natural number, however for the $\kappa-\mu$ fading model, this restriction is relaxed to allow $\mu$ to assume any positive real value.}.

    The PDF, $j$-th moment and Laplace transform of $G$ are respectively given by \cite{Yacoub2007a, DaCosta2008}
  \begin{equation}
  	  \begin{split}
  	    f_{G}(x)
  	    &= \frac{ \mu }{\kappa^{\frac{\mu-1}{2}} \mathrm{e}^{\mu \kappa}}
  	    \left( \frac{1+\kappa}{\bar{w}} \right)^{\frac{\mu+1}{2}}
  	    x^{\frac{\mu-1}{2}}
  	    \exp\left( -\frac{\mu (1+\kappa)}{\bar{w}} x \right)
  	    \mathrm{I}_{\mu-1}\left( 2\mu \sqrt{ \frac{\kappa (1+\kappa)}{\bar{w}}x }\right), \\
  	    \mathbb{E}\left[ G^j \right] &= \frac{\bar{w}^j}{\mathrm{e}^{\mu \kappa} [(1+\kappa)\mu]^j}
  	    \frac{\Gamma(\mu+j)}{\Gamma(\mu)} \Hypergeometric{1}{1}{\mu+j}{\mu}{\mu \kappa}, \\
  	    \mathcal{L}_{G}(s) &= \mathbb{E}\left[ \exp(-s G) \right] =
  	    \frac{1}{\mathrm{e}^{\mu \kappa}} \sum_{n = 0}^{\infty} \frac{\left( \mu \kappa \right)^n}{n! \Gamma(n+\mu)}
  	    G_{1,1}^{1,1}\left( \frac{\mu (1+\kappa)}{s \bar{w}} \Bigg\vert { \mycom{1}{n+\mu} } \right),
  	  \end{split}
  	  \label{eq_secII-2.003}
    \end{equation}
    where $\bar{w} = \mathbb{E}[G]$, $\kappa$ and $\mu$ are positive real values, $\mathrm{I}_{\nu}(x)$ is the modified Bessel function of the first kind with order $\nu$, $\Gamma(t) = \int_{0}^{\infty}x^{t-1} \mathrm{e}^{-x} \mathrm{d}x $ is the Gamma function, $\Hypergeometric{1}{1}{a}{b}{x}$ is the confluent hypergeometric function, and $\MeijerG[a, b]{n}{p}{m}{q}{z}$ is the Meijer G-function (See Appendix I).

    \subsubsection{$\eta-\mu$ distribution}

    The $\eta-\mu$ distribution is used to represent small scale fading under non-line-of-sight (NLOS) conditions in inhomogeneous, linear, non-circularly symmetric environments \cite{Yacoub2007a, 4162456}. It is a general fading distribution that includes Hoyt (Nakagami-\textit{q}), One-sided Gaussian, Rayleigh, and Nakagami-\textit{m} as special cases (See Table \ref{tab_channel1}).

    The received signal in an $\eta-\mu$ distributed fading channel is composed of clusters of multi-path waves. The in-phase and quadrature components of the fading signal within each cluster are assumed to be either independent with unequal powers or correlated with identical powers. The envelope $\sqrt{G}$ of $\eta-\mu$ fading signal can be written in terms of the in-phase and quadrature components given by (\ref{eq_secII-2.001}), where $x_j$ and $y_j$ are mutually independent Gaussian random variables with
    \begin{equation}
  	  \begin{split}
  	  \mathbb{E}(x_j) &= \mathbb{E}(y_j) = 0, \quad
  	  \mathbb{E}(x_j^2) = \sigma_x^2, ~ \mathbb{E}(y_j^2) = \sigma_y^2,~ \sigma_x \neq \sigma_y.
  	  \end{split}
  	  \label{eq_secII-2.006}
    \end{equation}
  Physically, $\eta = \frac{\sigma_x^2}{\sigma_y^2}$ denotes the scattered-wave power ratio between the in-phase and quadrature components of each cluster of multi-path, and $\mu$ represents the real valued extension of $n/2$.

    The PDF, $j$-th moment and Laplace transform of $G$ are respectively given by \cite{Yacoub2007a, DaCosta2008}
    \begin{equation}
  	  \begin{split}
  	    f_{G}(x) &=
  	    \frac{ 2\sqrt{\pi} \mu^{\mu+\frac{1}{2}} h^{\mu} }{\Gamma(\mu) H^{\mu-\frac{1}{2}} \bar{w}^{\mu+\frac{1}{2}} }
  	    x^{\mu-\frac{1}{2}}
  	    \exp\left( -\frac{2 \mu h x}{\bar{w}} \right)
  	    \mathrm{I}_{\mu-\frac{1}{2}}\left( \frac{2 \mu H x}{\bar{w}}  \right),  \\
  	    \mathbb{E}\left[ G^j \right] &= \frac{\bar{w}^j}{h^{\mu+j} (2 \mu)^j}
  	    \frac{\Gamma(2\mu+j)}{\Gamma(2\mu)}
  	    \Hypergeometric{2}{1}{\mu+\frac{j}{2} +\frac{1}{2}, \mu+\frac{j}{2}}{\mu+\frac{1}{2}}{\left( \frac{H}{h}\right)^2}, \\
  	    \mathcal{L}_{G}(s) &=
  	    \frac{2\sqrt{\pi}}{\Gamma(\mu) h^{\mu}}
  	    \sum_{n = 0}^{\infty} \frac{2^{-2n-2\mu}}{n! \Gamma(n+\mu+\frac{1}{2})}
  	    \left( \frac{H}{h}\right)^{2n}
  	    G_{1,1}^{1,1}\left( \frac{2 \mu h}{s \bar{w}} \Bigg\vert { \mycom{1}{2n+2\mu} } \right),
  	  \end{split}
  	  \label{eq_secII-2.007}
    \end{equation}
    where $\bar{w} = \mathbb{E}[G]$, $\eta$ and $\mu$ are positive real values\footnote{
    $h$ and $H$ have two formats: format $1$ is $h = \frac{2 + \eta^{-1} + \eta}{4}$, $H = \frac{\eta^{-1} - \eta}{4}$ for $0 < \eta < \infty$, whereas format $2$ is $h = \frac{1}{1-\eta^2}$, $H = \frac{\eta}{1-\eta^2}$ for $-1 < \eta < 1$. In this paper, we will only consider format $1$ for notational simplicity.
    }, $h = \frac{2 + \eta^{-1} + \eta}{4}$, $H = \frac{\eta^{-1} - \eta}{4}$, and $\Hypergeometric{2}{1}{a}{b}{x}$ is the Gaussian hypergeometric function (See Appendix I).

  Since the Laplace transform of $G$ in (\ref{eq_secII-2.003}) and (\ref{eq_secII-2.007}) are both represented in terms of Meijer G-functions, computation and analysis for the $\kappa-\mu$ and $\eta-\mu$ distributions is quite complex. In the following lemma, we represent (\ref{eq_secII-2.003}) and (\ref{eq_secII-2.007}) in terms of some elementary functions, which are later used to simplify the D2D network performance evaluation.
  \begin{lemm}
  	   The Laplace transform of $G$ are given by
  	  \begin{equation}
  	  	  \mathcal{L}_{G}(s) =
  				\begin{dcases}
  				\frac{1}{\left( 1 + \frac{s \bar{w}}{\mu (1 + \kappa)}\right)^{\mu}}
  				  \exp\left( -\frac{\mu \kappa}{1 + \frac{\mu (1 + \kappa)}{s \bar{w}}}\right)
  				  &\mbox{for $\kappa-\mu$},\\
  			\frac{1}{h^{\mu}} \left[ \left( 1 + \frac{s \bar{w}}{2 \mu h}\right)^2 - \left( \frac{H}{h}\right)^2 \right]^{-\mu}
  				  &\mbox{for $\eta-\mu$},
  				\end{dcases}
  	\label{eq_secII-2.010}
  \end{equation}
  where $\bar{w} = \mathbb{E}[G]$, $h = \frac{2 + \eta^{-1} + \eta}{4}$, $H = \frac{\eta^{-1} - \eta}{4}$, and $\eta  > 0$.
  \end{lemm}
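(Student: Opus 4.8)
The plan is to prove both cases by summing the series representations of $\mathcal{L}_{G}(s)$ already recorded in the third lines of (\ref{eq_secII-2.003}) and (\ref{eq_secII-2.007}), after collapsing the Meijer $G$-function into an elementary form. The single identity that drives everything is the reduction
\begin{equation}
G_{1,1}^{1,1}\left( z \Bigg\vert {\mycom{1}{b}} \right) = \Gamma(b)\left( \frac{z}{1+z} \right)^{b},
\label{eq_pf_meijer}
\end{equation}
which follows from the Mellin--Barnes definition of $G_{1,1}^{1,1}$ with upper index $1$: the integrand is $\Gamma(b+s)\Gamma(-s)\,z^{-s}$, and closing the contour recovers the Beta-type Mellin--Barnes integral whose value is $\Gamma(b)(1+z)^{-b}z^{b}$. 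Once (\ref{eq_pf_meijer}) is in hand, each term of the series sheds its $\Gamma$-factor and the sums become standard.

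For the $\kappa-\mu$ case I would set $z = \frac{\mu(1+\kappa)}{s\bar{w}}$ and insert (\ref{eq_pf_meijer}) with $b = n+\mu$ into (\ref{eq_secII-2.003}). The $\Gamma(n+\mu)$ produced by (\ref{eq_pf_meijer}) cancels the $\Gamma(n+\mu)$ in the denominator, leaving a pure exponential series:
\begin{equation}
\mathcal{L}_{G}(s) = \mathrm{e}^{-\mu\kappa}\left( \frac{z}{1+z} \right)^{\mu}\sum_{n=0}^{\infty}\frac{1}{n!}\left( \frac{\mu\kappa\, z}{1+z} \right)^{n} = \mathrm{e}^{-\mu\kappa}\left( \frac{z}{1+z} \right)^{\mu}\exp\left( \frac{\mu\kappa\, z}{1+z} \right).
\label{eq_pf_kmu}
\end{equation}
The factor $\mathrm{e}^{-\mu\kappa}$ then combines with the exponential, using $1-\frac{z}{1+z} = \frac{1}{1+z}$, to give $\exp\!\big(-\frac{\mu\kappa}{1+z}\big)$; since $z^{-1} = \frac{s\bar{w}}{\mu(1+\kappa)}$ and $\big(\frac{z}{1+z}\big)^{\mu} = (1+z^{-1})^{-\mu}$, expression (\ref{eq_pf_kmu}) is exactly the first branch of (\ref{eq_secII-2.010}).

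For the $\eta-\mu$ case I would take $z = \frac{2\mu h}{s\bar{w}}$ and apply (\ref{eq_pf_meijer}) with $b = 2n+2\mu$. Legendre's duplication formula $\Gamma(2n+2\mu) = \frac{2^{2n+2\mu-1}}{\sqrt{\pi}}\Gamma(n+\mu)\Gamma(n+\mu+\tfrac{1}{2})$ then clears the factors $2^{-2n-2\mu}$ and $\Gamma(n+\mu+\tfrac{1}{2})$ in (\ref{eq_secII-2.007}), leaving the generalized binomial series $\sum_{n\ge0}\frac{\Gamma(n+\mu)}{n!}t^{n} = \Gamma(\mu)(1-t)^{-\mu}$ with $t = \big(\frac{H}{h}\big)^{2}\big(\frac{z}{1+z}\big)^{2}$. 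Summing this and factoring $\big(1+z^{-1}\big)^{2} = \big(1+\frac{s\bar{w}}{2\mu h}\big)^{2}$ out of the resulting $\mu$-th power collapses everything, via the difference of squares, to $\frac{1}{h^{\mu}}\big[(1+\frac{s\bar{w}}{2\mu h})^{2}-(\frac{H}{h})^{2}\big]^{-\mu}$, the second branch of (\ref{eq_secII-2.010}).

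I expect the technical care to sit in two places. First, (\ref{eq_pf_meijer}) and the two summations are only literally valid in the half-plane of $s$ where $|z/(1+z)|<1$ and $|t|<1$; the term-by-term reduction and the interchange of summation with (\ref{eq_pf_meijer}) must be justified by absolute convergence there, and the stated closed forms then extend to all admissible $s$ by analytic continuation. Second, as an independent check — and an alternative proof for integer cluster number — one can bypass the special functions entirely: from the Gaussian envelope (\ref{eq_secII-2.001})--(\ref{eq_secII-2.002}) and (\ref{eq_secII-2.006}), $\mathcal{L}_{G}(s)$ factorizes over the independent Gaussian-squared summands, each a (non)central $\chi^{2}$ with an elementary transform. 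Evaluating the product and substituting $\kappa=\frac{d^{2}}{2n\sigma^{2}}$, $\bar{w}=2n\sigma^{2}(1+\kappa)$ (respectively $\eta=\sigma_{x}^{2}/\sigma_{y}^{2}$, $\bar{w}=2\mu(\sigma_{x}^{2}+\sigma_{y}^{2})$) reproduces (\ref{eq_secII-2.010}) for $\mu=n$ (respectively $\mu=n/2$), and the general real-$\mu$ statement follows since both sides are analytic in $\mu$.
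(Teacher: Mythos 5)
Your proposal follows essentially the same route as the paper's Appendix III: your key Meijer G reduction is exactly the paper's second relation in (\ref{eq_app_III-005}), and the subsequent collapses — the exponential series for $\kappa-\mu$ and the negative-binomial series for $\eta-\mu$, with Legendre duplication from (\ref{eq_app_III-007}) absorbing the $2^{-2n-2\mu}$ and $\Gamma(n+\mu+\tfrac{1}{2})$ factors — mirror (\ref{eq_app_II-ext-001}) and (\ref{eq_app_II-ext-002}) term for term. Your two additions, the convergence/analytic-continuation remark (the paper instead verifies $-1 < H/h < 1$ directly via (\ref{eq_app_II-ext-003})) and the chi-squared factorization cross-check, are supplementary rather than a different proof.
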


  \begin{proof}
  See Appendix III.
  \end{proof}

  \section{Interference Model of The Overlay D2D Network}

  In this section, we introduce the interference of cellular and D2D links and derive the Laplace transform of the interference for the generalized fading channels considered here.

  \subsection{D2D Mode}

  Let us consider a D2D link, where co-channel interference is generated by potential D2D UEs operating in D2D mode.
  Due to overlaid spectrum access, cross-mode interference between D2D and cellular links is excluded. Since the medium access scheme for D2D transmission is ALOHA with a transmit probability $\varepsilon$, the effective interference of the D2D link becomes a thinning PPP, denoted by $\varepsilon \Phi_d$, with intensity $\varepsilon \lambda_d$. Then, the interference at the intended D2D receiver is given by
    \begin{equation}
  	  \begin{split}
  	    I_d = \sum_{X_j \in \varepsilon \Phi_d \backslash \{ 0 \}}
  	    \hat{P}_{d, j} G_j {||X_j||}^{-\tau_d},
  	  \end{split}
  	  \label{eq_secIII-2.001}
    \end{equation}
  and the Laplace transform of $I_d$ is derived in the following lemma.

  \begin{lemm}
  		For overlay D2D, the Laplace transform of the interference at the D2D receiver is given by
  		  \begin{equation}
  		  \begin{split}
  		    \mathcal{L}_{I_d}(s) = \exp\left( -c ~ s^{\delta_d}\right),
  		  \end{split}
  		  \label{eq_secIII-2.002}
  		  \end{equation}
  		where the constant term $c$ for each channel distribution is derived as
  		  \begin{equation}
  		  \begin{split}
  		    c_{\kappa-\mu} &=
  		    \frac{q \varepsilon \lambda }{\hat{\xi}  \mathrm{e}^{\mu \kappa}} \cdot
  		    \frac{ \Hypergeometric{1}{1}{\mu+\delta_d}{\mu}{\mu \kappa}}{\mathrm{sinc}\left(\delta_d\right)} \cdot
  		    \left( \frac{\bar{w}}{(1 + \kappa) \mu} \right)^{\delta_d} \cdot
  		    \binom{\mu+\delta_d-1}{\delta_d},
  		      \end{split}
  		  \label{eq_secIII-2.003-a}
  		  \end{equation}
  		  \begin{equation}
  		  \begin{split}
  		    c_{\eta-\mu} &=
  		    \frac{q \varepsilon \lambda }{\hat{\xi}  h^{\mu}} \cdot
  		    \frac{ \Hypergeometric{2}{1}{\mu+\frac{\delta_d}{2} +\frac{1}{2}, \mu+\frac{\delta_d}{2}}{\mu + \frac{1}{2}}{\left( \frac{H}{h} \right)^2}}
  		    {\mathrm{sinc}\left(\delta_d\right)} \cdot
  		    \left( \frac{\bar{w}}{2 \mu h} \right)^{\delta_d} \cdot
  		    \binom{2\mu+\delta_d-1}{\delta_d},
  		  \end{split}
  		  \label{eq_secIII-2.003-b}
  		  \end{equation}
  		  for the $\kappa-\mu$ and $\eta-\mu$ distribution, respectively, with $\delta_d = \frac{2}{\tau_d}$, $\bar{w} = \mathbb{E}[G_i]$ and $\hat{\xi} = \frac{\xi}{\gamma\left( 2, \xi \pi \theta^2 \right)}$.
  \end{lemm}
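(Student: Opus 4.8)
The plan is to compute $\mathcal{L}_{I_d}(s)=\mathbb{E}[\exp(-sI_d)]$ directly from the probability generating functional (PGFL) of the interfering process. By Assumption~1 the D2D-mode interferers $\varepsilon\Phi_d$ form a homogeneous PPP of intensity $\varepsilon\lambda_d$, each carrying an independent mark $(L_j,G_j)$: the link length $L_j$ fixes the channel-inversion power $\hat{P}_{d,j}=L_j^{\tau_d}$, while $G_j$ is the fading gain, with $L_j$ distributed as the truncated Rayleigh law of (\ref{eq_secII-1.002}) conditioned on $L_d\le\theta$. The PGFL of a marked PPP then gives
\begin{equation}
\mathcal{L}_{I_d}(s)=\exp\!\left(-\varepsilon\lambda_d\int_{\mathbb{R}^2}\!\left(1-\mathbb{E}_{L,G}\!\left[\mathrm{e}^{-s L^{\tau_d}G\,\|x\|^{-\tau_d}}\right]\right)\mathrm{d}x\right),
\end{equation}
and passing to polar coordinates ($\mathrm{d}x=2\pi r\,\mathrm{d}r$) reduces the exponent to a single radial integral.

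First I would evaluate that radial integral. Substituting $t=s L^{\tau_d}G\,r^{-\tau_d}$ turns $\int_0^\infty(1-\mathrm{e}^{-sL^{\tau_d}Gr^{-\tau_d}})\,r\,\mathrm{d}r$ into $\tfrac{1}{2}(sL^{\tau_d}G)^{\delta_d}\int_0^\infty(1-\mathrm{e}^{-t})t^{-\delta_d-1}\mathrm{d}t$, and a single integration by parts identifies the $t$-integral as $\Gamma(1-\delta_d)/\delta_d$, which is finite because $0<\delta_d<1$ (as $\tau_d>2$). Since $\tau_d\delta_d=2$, the only $s$-dependence that survives is the factor $s^{\delta_d}$, which is precisely why the transform has the stretched-exponential form $\exp(-c\,s^{\delta_d})$ claimed in (\ref{eq_secIII-2.002}); collecting constants yields
\begin{equation}
c=\varepsilon\lambda_d\,\pi\,\Gamma(1-\delta_d)\,\mathbb{E}\!\left[L_d^2\mid L_d\le\theta\right]\mathbb{E}\!\left[G^{\delta_d}\right].
\end{equation}

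Next I would dispatch the two moment factors separately. For the link length, integrating $x^2$ against the truncated density and substituting $v=\xi\pi x^2$ gives $\mathbb{E}[L_d^2\mid L_d\le\theta]=\gamma(2,\xi\pi\theta^2)/(\xi\pi\,\mathrm{P}(L_d\le\theta))$; multiplying by $\lambda_d=q\,\mathrm{P}(L_d\le\theta)\lambda$ cancels the truncation probability, and the definition $\hat\xi=\xi/\gamma(2,\xi\pi\theta^2)$ collapses $\varepsilon\lambda_d\,\pi\,\mathbb{E}[L_d^2\mid L_d\le\theta]$ to the clean prefactor $q\varepsilon\lambda/\hat\xi$. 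The fractional fading moment $\mathbb{E}[G^{\delta_d}]$ is then obtained simply by setting $j=\delta_d$ in the $j$-th moment expressions of (\ref{eq_secII-2.003}) for $\kappa-\mu$ and (\ref{eq_secII-2.007}) for $\eta-\mu$, which already supply the $\mathrm{e}^{-\mu\kappa}$ term, the hypergeometric factor, the power $(\bar{w}/((1+\kappa)\mu))^{\delta_d}$ (resp. $(\bar{w}/(2\mu h))^{\delta_d}$), and the Gamma ratio $\Gamma(\mu+\delta_d)/\Gamma(\mu)$ (resp. $\Gamma(2\mu+\delta_d)/\Gamma(2\mu)$).

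The step I expect to carry the real content is the final cosmetic rearrangement into the stated closed forms. Here I would merge the stray $\Gamma(1-\delta_d)$ with the Gamma ratio using the reflection formula $\Gamma(\delta_d)\Gamma(1-\delta_d)=\pi/\sin(\pi\delta_d)$ together with $\Gamma(1+\delta_d)=\delta_d\Gamma(\delta_d)$, which yields $\Gamma(1-\delta_d)\Gamma(1+\delta_d)=1/\mathrm{sinc}(\delta_d)$ under the normalized convention $\mathrm{sinc}(x)=\sin(\pi x)/(\pi x)$. Consequently $\Gamma(1-\delta_d)\,\Gamma(\mu+\delta_d)/\Gamma(\mu)$ becomes $\big[\Gamma(\mu+\delta_d)/(\Gamma(1+\delta_d)\Gamma(\mu))\big]/\mathrm{sinc}(\delta_d)$, and recognizing the bracketed factor as the generalized binomial coefficient $\binom{\mu+\delta_d-1}{\delta_d}$ reproduces (\ref{eq_secIII-2.003-a}) exactly; the identical manipulation with $2\mu$ in place of $\mu$ gives $\binom{2\mu+\delta_d-1}{\delta_d}$ and hence (\ref{eq_secIII-2.003-b}). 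The only bookkeeping that needs care is the $\eta-\mu$ case, where the $h^{-(\mu+\delta_d)}$ produced by the moment formula must be split so that the stated $h^{-\mu}$ prefactor absorbs $h^{-\mu}$ while the remaining $h^{-\delta_d}$ is folded into the displayed power $(\bar{w}/(2\mu h))^{\delta_d}$.
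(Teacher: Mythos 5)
Your proposal is correct and follows essentially the same route as the paper's Appendix IV: the PGFL of the (marked) PPP, the substitution $t = s\hat{P}_d G r^{-\tau_d}$ plus integration by parts to extract $s^{\delta_d}\Gamma(1-\delta_d)$, the factorization of the constant into $\pi\varepsilon\lambda_d\,\Gamma(1-\delta_d)\,\mathbb{E}\bigl[\hat{P}_d^{\delta_d}\bigr]\mathbb{E}\bigl[G^{\delta_d}\bigr]$ (the paper cites its Appendix II lemma for the power moment, where you integrate the truncated Rayleigh law directly --- the same computation), and finally the identities $\Gamma(1+\delta_d)\Gamma(1-\delta_d)=1/\mathrm{sinc}(\delta_d)$ and $\binom{\mu+\delta_d-1}{\delta_d}=\Gamma(\mu+\delta_d)/\bigl(\Gamma(1+\delta_d)\Gamma(\mu)\bigr)$, exactly as the paper invokes them via its Appendix I. One cosmetic slip: your intermediate radial integral should carry the Jacobian factor $\delta_d$, \textit{i.e.}, $\tfrac{\delta_d}{2}\bigl(sL^{\tau_d}G\bigr)^{\delta_d}\int_0^\infty(1-\mathrm{e}^{-t})\,t^{-\delta_d-1}\mathrm{d}t$, which cancels the $1/\delta_d$ coming from the $t$-integral; your final expression for $c$ already reflects this cancellation, so the result stands.
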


  \begin{proof}
  See Appendix IV.
  \end{proof}

  \subsection{Cellular Mode}

    Over a cellular link, the interference originates from cellular UEs located outside the cell $\mathcal{A}$. Since the cellular interference is denoted by $\Phi_{c, a}$ and the area outside the cell is represented by $\mathcal{A}^c$, the interference at the intended cellular BS is written by
    \begin{equation}
  	  \begin{split}
  	  I_c = \sum_{X_j \in \Phi_{c, a} \cap \mathcal{A}^c}
  	    {P}_{c, j} G_j {||X_j||}^{-\tau_c},
  	  \end{split}
  	  \label{eq_secIII-3.001}
    \end{equation}
    and the Laplace transform of $I_c$ is given by the following lemma.

  \begin{lemm}
  		For overlay D2D, the Laplace transform of the interference at the cellular BS is given by
  		  \begin{equation}
  		  \begin{split}
  		    \mathcal{L}_{I_c}(s) = \exp\left( -2 \pi \lambda_b \int_{R}^{\infty} \left( 1 - \varphi(r) \right) r \mathrm{d}r  \right),
  		  \end{split}
  		  \label{eq_secIII-3.002}
  		  \end{equation}
  		where $\delta_c = \frac{2}{\tau_c}$ and $\varphi(r)$ for each channel distribution is written as
  		  \begin{equation}
  		  \begin{split}
  		    \varphi_{\kappa-\mu}(r) &=
  		    \frac{\pi \lambda_b r^2 \delta_c}{\mathrm{e}^{\mu \kappa}}
  		    \left( \frac{\mu (1 + \kappa)}{s \bar{w}} \right)^{\delta_c}
  		    \sum_{n = 0}^{\infty} \frac{(\mu \kappa)^n}{n! \Gamma\left(\mu + n\right)}
  		    G_{2,2}^{2,1}\left( \frac{\mu (1 + \kappa) \left( \pi \lambda_b \right)^{\frac{1}{\delta_c}} r^{\tau_c}}{s \bar{w}}
  		    \Bigg\vert { \mycom{1-\delta_c, 1}{0, n+\mu-\delta_c} } \right),
  		  \end{split}
  		  \label{eq_secIII-3.003-a}
  		  \end{equation}
  		  \begin{equation}
  		  \begin{split}
  		    \varphi_{\eta-\mu}(r) &=
  		    \frac{2 \pi^{\frac{3}{2}} \lambda_b r^2 \delta_c}{\Gamma(\mu) h^{\mu}}
  		    \left( \frac{2 \mu h}{s \bar{w}} \right)^{\delta_c}
  		    \sum_{n = 0}^{\infty}
  		    \frac{2^{-2n-2\mu}}{n! \Gamma\left(\mu + n + \frac{1}{2}\right)}
  		    \left( \frac{H}{h} \right)^{2 n} \\
  		    &\times
  		    G_{2,2}^{2,1}\left( \frac{2 \mu h \left( \pi \lambda_b \right)^{\frac{1}{\delta_c}} r^{\tau_c}}{s \bar{w}}
  		    \Bigg\vert { \mycom{1-\delta_c, 1}{0, 2n+2\mu-\delta_c} } \right),
  		  \end{split}
  		  \label{eq_secIII-3.003-b}
  		  \end{equation}
  		for the $\kappa-\mu$ and $\eta-\mu$ distribution, respectively.
   \end{lemm}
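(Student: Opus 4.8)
The plan is to compute $\mathcal{L}_{I_c}(s)=\mathbb{E}\!\left[e^{-sI_c}\right]$ directly from the probability generating functional (PGFL) of a PPP. By Assumption 2 the interfering cellular UEs form a PPP $\Phi_{c,a}$ of intensity $\lambda_b$ on $\mathcal{A}^c=\{x:\|x\|>R\}$, carrying marks $(P_{c,j},G_j)$ that are i.i.d.\ and independent of the locations. Applying the PGFL to $I_c$ in (\ref{eq_secIII-3.001}) gives $\mathcal{L}_{I_c}(s)=\exp\!\big(-\lambda_b\int_{\mathcal{A}^c}\big(1-\mathbb{E}_{P,G}[e^{-sPG\|x\|^{-\tau_c}}]\big)\,dx\big)$. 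Passing to polar coordinates and carrying out the trivial angular integration yields exactly the outer form (\ref{eq_secIII-3.002}), with $\varphi(r)=\mathbb{E}_{P,G}\!\left[e^{-sPG r^{-\tau_c}}\right]$, so the whole task reduces to evaluating $\varphi(r)$.

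First I would insert the channel-inversion power control $P_{c,j}=L_{c,j}^{\tau_c}$ and, using independence of $G_j$ and $L_{c,j}$, take the fading expectation first, which turns it into the Laplace transform of $G$: $\varphi(r)=\mathbb{E}_{L_c}\!\left[\mathcal{L}_{G}\!\left(s\,L_c^{\tau_c} r^{-\tau_c}\right)\right]$. Averaging over the cellular link length with density (\ref{eq_secII-1.003}) and substituting $t=l^{\tau_c}$ then gives $\varphi(r)=\frac{\delta_c}{R^2}\int_0^{R^{\tau_c}} t^{\delta_c-1}\,\mathcal{L}_{G}\!\left(s\,r^{-\tau_c}t\right)dt$, with $\delta_c=2/\tau_c$.

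Next I would substitute the Meijer $G$-function series for $\mathcal{L}_G$ from (\ref{eq_secII-2.003}) for $\kappa$--$\mu$ (resp.\ (\ref{eq_secII-2.007}) for $\eta$--$\mu$). Its argument is proportional to $1/t$, so I would first apply the inversion identity $G^{m,n}_{p,q}(z)=G^{n,m}_{q,p}(1/z)$ to make the argument proportional to $t$, and then invoke the finite-integral formula $\int_0^{y}x^{\alpha-1}G^{m,n}_{p,q}(\omega x\,|\,a_p;b_q)\,dx=y^{\alpha}\,G^{m,n+1}_{p+1,q+1}(\omega y\,|\,1-\alpha,a_p;b_q,-\alpha)$, which promotes $G^{1,1}_{1,1}$ to $G^{1,2}_{2,2}$ by appending $1-\delta_c$ and $-\delta_c$. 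Here $R^{\tau_c\delta_c}=R^2$ cancels the $1/R^2$ prefactor. A second inversion, followed by the argument-translation property $z^{-\delta_c}G^{m,n}_{p,q}(z\,|\,a_p;b_q)=G^{m,n}_{p,q}(z\,|\,a_p-\delta_c;b_q-\delta_c)$, rewrites the result in the argument $\frac{\mu(1+\kappa)(\pi\lambda_b)^{1/\delta_c}r^{\tau_c}}{s\bar{w}}$ appearing in the statement; the translation factor $z^{\delta_c}$ is precisely what produces the prefactors $\pi\lambda_b r^2$ and $\big(\mu(1+\kappa)/(s\bar{w})\big)^{\delta_c}$, giving (\ref{eq_secIII-3.003-a}). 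The $\eta$--$\mu$ case (\ref{eq_secIII-3.003-b}) is obtained by the same chain of manipulations applied to (\ref{eq_secII-2.007}), with $n+\mu$ replaced by $2n+2\mu$ and the constants $\frac{2\sqrt{\pi}}{\Gamma(\mu)h^{\mu}}$ and $2^{-2n-2\mu}(H/h)^{2n}$ carried through unchanged.

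The main obstacle is the bookkeeping in the last step: tracking the Meijer-$G$ indices $(m,n,p,q)$ and both parameter lists correctly through the two inversions and the translation, and verifying that $\tau_c>2$ (equivalently $\delta_c<1$) keeps the $t$-integral and the Mellin--Barnes contours convergent so that termwise interchange of summation and integration is legitimate. Everything else is a routine substitution of the series representations and collection of constants.
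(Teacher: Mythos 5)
Your proposal is correct and follows essentially the same route as the paper's Appendix V proof: PGFL of the PPP outside $\mathcal{B}(0,R)$, reduction of $\varphi(r)$ to an expectation of $\mathcal{L}_G$ over the link length via channel inversion, and then the same chain of Meijer-$G$ manipulations (inversion, weighted finite-integral formula, second inversion, argument-translation) with $R=(\pi\lambda_b)^{-1/2}$ producing the stated prefactors. Your explicit use of the weighted integral identity and the remark on $\delta_c<1$ for convergence are slightly cleaner than the paper's bookkeeping, but the argument is the same.
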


  \begin{proof}
  See Appendix V.
  \end{proof}

  \begin{remark}
  	We note a connection between Lemma 4 and other literatures on generalized fading. By using the conversion  method in \cite{Blaszczyszyn2013}, (\ref{eq_secIII-2.001}) can be expressed as $I_d = \sum_{X_j \in \varepsilon \Phi_d \backslash \{ 0 \}} \hat{P}_{d, j} {||Y_j||}^{-\tau_d}$,
  	where $Y_j = G_j^{-\frac{1}{\tau_d}} X_j$ for any $X_j \in  \varepsilon \Phi_d \backslash \{ 0 \}$. Due to \cite[Lemma 1]{Dhillon2014}, the new point process $\hat{\Phi} = \{ Y_j\}$ is also a homogeneous PPP with density $\varepsilon \lambda_d \mathbb{E}\left[ G^{\delta_d} \right]$. The Laplace transform of $I_d$ is given by
  			  \begin{equation}
  			  \begin{split}
  			    \mathcal{L}_{I_d}(s) = \exp\left( -\pi \varepsilon \lambda_d  \mathbb{E}\left[ G^{\delta_d} \right] \mathbb{E}\left[ {\hat{P}_d}^{\delta_d} \right] \Gamma\left( 1-\delta_d \right) s^{\delta_d}\right),
  			  \end{split}
  			  \label{eq_secIII-2.005}
  			  \end{equation}
  	where we used similar derivation as Appendix IV. Then, by substituting $\lambda_d = q \mathrm{P}(L_d \leq \theta) \lambda$, (\ref{eq_secII-2.003}), (\ref{eq_secIII-1.003}) into (\ref{eq_secIII-2.005}), we get $c = \pi \varepsilon \lambda_d  \mathbb{E}\left[ G^{\delta_d} \right] \mathbb{E}\left[ {\hat{P}_d}^{\delta_d} \right] \Gamma\left( 1-\delta_d \right)$, which corresponds to (\ref{eq_secIII-2.003-a}) for the $\kappa-\mu$ and (\ref{eq_secIII-2.003-b}) for the $\eta-\mu$ distribution. The same method can not be applied for Lemma 5, since the conversion method is not applicable when there is an exclusion zone in the interference field.
  \end{remark}

  \section{Stochastic Geometric Framework for System Performance Evaluation}

  To evaluate the network performance,  one normally needs to calculate the average of some functions of the SINR for a given SINR distribution $f_{\gamma}(x)$ where SINR $\gamma$ is defined in (\ref{eq_secIII-1.002}). The average of an arbitrary function of the SINR represents most commonly used characteristics, such as the spectral efficiency, error probability, statistical moments, outage or coverage probability, etc. Quite often this can be an extremely challenging task due in part to the complex nature of functions involved and also because, within the stochastic geometry framework, the closed form expression of the SINR $f_{\gamma}(x)$ distribution is known only for some special cases, such as Rayleigh \cite{Dhillon2012} or Nakagami-\textit{m} fading \cite{Tanbourgi2014}. Instead, for many cases, we can evaluate the Laplace transform of the interference $\mathcal{L}_{I}(s)$ and the PDF of the intended channel $f_{W}(x)$.

  To this end, an analytical method was proposed by Hamdi in \cite{Hamdi2007} to compute $\mathbb{E}\left[ g\left( \gamma \right)\right]$ for an arbitrary function of the SINR $g(\gamma)$ over Nakagami-\textit{m} fading channel. In the following theorems, we now generalize this method within the stochastic geometry framework and evaluate the average of an arbitrary function of the SINR by using $\mathcal{L}_{I}(s)$ and $f_{W}(x)$ only, without knowing $f_{\gamma}(x)$. Theorem $1$ and $2$ assumes that the received signal envelope of the intended link $\sqrt{G_i} = \sqrt{W}$ undergoes $\kappa-\mu$ and $\eta-\mu$ fading, respectively.

  \begin{thm}
  We assume that the received signal envelope of the intended link $\sqrt{W}$ is a $\kappa-\mu$ distributed random variable and $I$ is an arbitrary random variable that is independent of $W$. Then, the average $\mathbb{E}\left[ g\left( \frac{W}{I+N_0}\right)\right]$ is given by
    \begin{equation}
    \begin{split}
      &\mathbb{E}\left[ g\left( \frac{W}{I+N_0}\right)\right] =
      g(0) + \sum_{n = 0}^{\infty} \frac{\left( \mu \kappa \right)^n}{n! ~\mathrm{e}^{\mu \kappa}}
      \int_{0}^{\infty} g_{\mu+n}\left( z\right)
      \mathcal{L}_{I}\left( \frac{\mu(1+\kappa)}{\bar{w}}z\right)
      \mathrm{e}^{-\frac{\mu(1+\kappa)}{\bar{w}}z} \mathrm{d}z\\
      &= g(0) + \frac{\bar{w}}{\mu(1+\kappa) N_0} \sum_{m=1}^{M}
      \sum_{n = 0}^{\infty} \frac{c_m \left( \mu \kappa \right)^n}{n! ~ \mathrm{e}^{\mu \kappa}}
      g_{\mu+n}\left( \frac{\bar{w} x_m}{\mu(1+\kappa) N_0} \right)
      \mathcal{L}_{I}\left( \frac{x_m}{N_0}\right) + {R}_M,
    \end{split}
    \label{eq_secIV-3.001}
    \end{equation}
  where $g(x)$ is an analytic function, $\kappa$ and $\mu$ are non-negative real valued constants, $\bar{w} = \mathbb{E}[W]$, and
    \begin{equation}
    \begin{split}
      R_M &= \frac{\bar{w}}{\mu(1+\kappa) N_0}
      \sum_{m=M+1}^{\infty}
      \sum_{n = 0}^{\infty} \frac{c_m \left( \mu \kappa \right)^n}{n! ~ \mathrm{e}^{\mu \kappa}}
      g_{\mu+n}\left( \frac{\bar{w} x_m}{\mu(1+\kappa) N_0} \right)
      \mathcal{L}_{I}\left( \frac{x_m}{N_0}\right),\\
      g_{i}(z) &= \frac{1}{\Gamma(\mu+n)} \frac{\mathrm{d}^i}{\mathrm{d}z^i}\left( z^{\mu+n-1} g(z) \right),
    \end{split}
    \label{eq_secIV-3.002}
    \end{equation}
  $c_m$ and $x_m$ are the $m$-th weight and abscissa of the $M$-th order Laguerre polynomial, respectively.
  \end{thm}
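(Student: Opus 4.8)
Here is how I would approach the proof.

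The plan is to exploit the fact that the $\kappa$--$\mu$ envelope power admits a representation as a \emph{Poisson-weighted mixture of Gamma densities}, which reduces the theorem to an integration-by-parts identity applied term by term. Concretely, I would first expand the modified Bessel function in the $\kappa$--$\mu$ density $f_G$ of (\ref{eq_secII-2.003}) through its ascending series $\mathrm{I}_{\mu-1}(y)=\sum_{n\geq 0}\frac{(y/2)^{2n+\mu-1}}{n!\,\Gamma(n+\mu)}$. Collecting the powers of $\kappa$, of $(1+\kappa)/\bar w$ and of $x$, the prefactors telescope and one obtains
\begin{equation}
f_{W}(x)=\sum_{n=0}^{\infty}\frac{(\mu\kappa)^{n}}{n!\,\mathrm{e}^{\mu\kappa}}\,\frac{\beta^{\mu+n}x^{\mu+n-1}\mathrm{e}^{-\beta x}}{\Gamma(\mu+n)},\qquad \beta=\frac{\mu(1+\kappa)}{\bar w},
\label{eq_plan_mixture}
\end{equation}
i.e.\ a mixture of $\mathrm{Gamma}(\mu+n,\beta)$ densities whose mixing weights are exactly the Poisson$(\mu\kappa)$ masses. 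Since these weights sum to one, this single identity is what will later let the individual $g(0)$ contributions recombine into a single $g(0)$.

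Next I would evaluate each mixture component separately. Conditioning on $I$ and substituting $z=w/(I+N_{0})$ turns the $n$-th component of $\mathbb{E}[g(W/(I+N_0))]$ into $\frac{1}{\Gamma(\mu+n)}\int_{0}^{\infty}g(z)\,a^{\mu+n}z^{\mu+n-1}\mathrm{e}^{-az}\,\mathrm{d}z$ with $a=\beta(I+N_{0})$. The crucial observation is the identity $a^{\nu}\mathrm{e}^{-az}=(-1)^{\nu}\frac{\mathrm{d}^{\nu}}{\mathrm{d}z^{\nu}}\mathrm{e}^{-az}$ with $\nu=\mu+n$, which trades the awkward factor $a^{\nu}$ (the part carrying the $I$-dependence outside a Laplace transform) for $\nu$ derivatives of $\mathrm{e}^{-az}$. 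Integrating by parts $\nu$ times then moves these derivatives onto $z^{\nu-1}g(z)$, producing exactly $g_{\mu+n}(z)$ as defined in (\ref{eq_secIV-3.002}). The boundary terms at $z=\infty$ vanish by the exponential decay, while at $z=0$ the factor $z^{\nu-1}$ annihilates every term except the top one, whose value is $\Gamma(\nu)g(0)$; this is precisely the origin of the additive $g(0)$. Finally, taking the expectation over $I$ via $\mathbb{E}_I[\mathrm{e}^{-\beta(I+N_0)z}]=\mathrm{e}^{-\beta N_0 z}\mathcal{L}_{I}(\beta z)$ and re-summing over $n$ (the Poisson weights collapsing the $g(0)$ pieces) yields the first equality in (\ref{eq_secIV-3.001}).

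For the second equality I would recognise the remaining integral as a Laguerre-weighted one. Setting $x=\beta N_{0}z$ rewrites it as $\frac{1}{\beta N_0}\int_{0}^{\infty}g_{\mu+n}\!\big(\tfrac{x}{\beta N_0}\big)\,\mathcal{L}_{I}\!\big(\tfrac{x}{N_0}\big)\,\mathrm{e}^{-x}\,\mathrm{d}x$, and applying the $M$-point Gauss--Laguerre rule $\int_0^\infty F(x)\mathrm{e}^{-x}\mathrm{d}x\approx\sum_{m=1}^{M}c_{m}F(x_{m})$ with the stated nodes and weights $\{x_m,c_m\}$ reproduces the finite double sum, with $R_M$ collecting the residual tail $\sum_{m>M}$. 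Substituting $\beta=\mu(1+\kappa)/\bar w$ then supplies the prefactor $\bar w/(\mu(1+\kappa)N_0)$ and the arguments $\bar w x_m/(\mu(1+\kappa)N_0)$ and $x_m/N_0$.

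The main obstacle I anticipate is \emph{rigour rather than algebra}. I must justify interchanging the infinite Poisson sum with both the integral and the expectation (using the analyticity of $g$, the rapid decay of the Poisson weights, and a uniform bound on $g_{\mu+n}$), and I must verify that the integration-by-parts boundary terms at infinity indeed vanish, which requires $g$ and its derivatives to grow subexponentially so as to be dominated by $\mathrm{e}^{-az}$. A more delicate point is that $\nu=\mu+n$ need not be an integer, so the $\nu$-fold derivative and the $\nu$-fold integration by parts must be interpreted as a Weyl/fractional operation, or else the argument carried out for integer $\mu$ and extended by analytic continuation in $\mu$; ensuring that the boundary bookkeeping still delivers exactly one factor of $g(0)$ in this fractional setting is the step that will need the most care.
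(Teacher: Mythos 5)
Your proposal follows essentially the same route as the paper's own proof in Appendix VI: expanding the modified Bessel function into its ascending series to write the $\kappa$--$\mu$ power density as a Poisson-weighted mixture of Gamma densities, conditioning on $I$ and changing variables to $z = x/(I+N_0)$, integrating by parts $\mu+n$ times so that the boundary term at zero yields $g(0)$ (using $\sum_n a_n = 1$) and the integral term yields $g_{\mu+n}$, then applying $\mathbb{E}_I[\mathrm{e}^{-\beta(I+N_0)z}] = \mathrm{e}^{-\beta N_0 z}\mathcal{L}_I(\beta z)$ and Gauss--Laguerre quadrature. Your closing caveat about non-integer $\mu+n$ requiring a fractional-derivative interpretation or analytic continuation is a genuine subtlety that the paper's proof silently glosses over, so flagging it is a point in your favor rather than a gap.
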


  \begin{proof}
  See Appendix VI.
  \end{proof}

  \begin{thm}
  We assume that the received signal envelope of the intended link $\sqrt{W}$ is a $\eta-\mu$ distributed random variable and $I$ denote an arbitrary random variable such that $W$ and $I$ are to be assumed independent. Then, $\mathbb{E}\left[ g\left( \frac{W}{I+N_0}\right)\right]$ can be expressed as
    \begin{equation}
    \begin{split}
      &\mathbb{E}\left[ g\left( \frac{W}{I+N_0}\right)\right] = g(0) \sum_{n=0}^{\infty} a_n
      + \sum_{n=0}^{\infty} a_n \int_{0}^{\infty}
      g_{2\mu+2n}\left( z\right)
      \mathcal{L}_{I}\left( \frac{2\mu h}{\bar{w}}z\right)
      \mathrm{e}^{-\frac{2\mu h N_0}{\bar{w}}z} \mathrm{d}z\\
      &= g(0)
      + \frac{\bar{w}}{2\mu h N_0}
      \sum_{m = 1}^{M}
      \sum_{n=0}^{\infty} a_n c_m g_{2\mu+2n}\left( \frac{\bar{w} x_m}{2\mu h N_0} \right)
      \mathcal{L}_{I}\left( \frac{x_m}{N_0} \right) + R_M,
    \end{split}
    \label{eq_secIV-3.003}
    \end{equation}
  where $\eta$ and $\mu$ are non-negative real valued constants, $h = \frac{2 + \eta^{-1} + \eta}{4}$, $H = \frac{\eta^{-1} - \eta}{4}$,
  $\bar{w} = \mathbb{E}[W]$, and
    \begin{equation}
    \begin{split}
      a_n &= \binom{n+\mu-1}{n} \frac{H^{2n}}{h^{\mu + 2n}}, \quad
      g_{j}(z) = \frac{1}{\Gamma(2\mu+2n)} \frac{\mathrm{d}^j}{\mathrm{d}z^j}\left( z^{2\mu+2n-1} g(z) \right),\\
      R_M &= \frac{\bar{w}}{2\mu h N_0}
      \sum_{m = M+1}^{\infty}
      \sum_{n=0}^{\infty} a_n c_m g_{2\mu+2n}\left( \frac{\bar{w} x_m}{2\mu h N_0} \right)
      \mathcal{L}_{I}\left( \frac{x_m}{N_0} \right),
    \end{split}
    \label{eq_secIV-3.004}
    \end{equation}
  $c_m$ and $x_m$ are the $m$-th weight and abscissa of the $M$-th order Laguerre polynomial, respectively.
  \end{thm}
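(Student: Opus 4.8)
The plan is to mirror the argument behind Theorem 1, replacing the $\kappa-\mu$ density by the $\eta-\mu$ density (\ref{eq_secII-2.007}) and exploiting the same mixture-of-Gamma structure. First I would condition on the interference, writing $\mathbb{E}\left[ g\left( \frac{W}{I+N_0}\right)\right] = \mathbb{E}_I\,\mathbb{E}_W\left[ g\left( \frac{W}{I+N_0}\right) \mid I \right]$, which is legitimate because $W$ and $I$ are independent; it then suffices to evaluate the inner expectation for a fixed value of $I$ and to average over $I$ at the very end, where the dependence on $I$ will enter only through $\mathcal{L}_I$.

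The central observation is that the $\eta-\mu$ density is a countable mixture of Gamma densities. Substituting the ascending series $\mathrm{I}_{\mu-\frac{1}{2}}(x) = \sum_{n=0}^{\infty} \frac{(x/2)^{2n+\mu-\frac{1}{2}}}{n!\,\Gamma(n+\mu+\frac{1}{2})}$ for the modified Bessel function in (\ref{eq_secII-2.007}) and collecting powers of $x$, the density of $W$ becomes $\sum_{n=0}^{\infty} a_n\,\phi_n(x)$, where $\phi_n$ is the Gamma density of shape $2\mu+2n$ and rate $\beta = \frac{2\mu h}{\bar{w}}$, and $a_n = \binom{n+\mu-1}{n}\frac{H^{2n}}{h^{\mu+2n}}$ as in (\ref{eq_secIV-3.004}). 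I would then verify that this is a genuine probability mixture: by the generalized binomial series $\sum_{n=0}^{\infty} a_n = h^{-\mu}\left(1-H^2/h^2\right)^{-\mu} = h^{\mu}\left(h^2-H^2\right)^{-\mu} = 1$, using the identity $h^2-H^2 = h$ that follows from $h\pm H = \frac{1+\eta^{\mp 1}}{2}$. This is exactly the fact that collapses the prefactor $g(0)\sum_{n} a_n$ in the first line of (\ref{eq_secIV-3.003}) to $g(0)$ in the second.

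Next I would apply the generalized Hamdi identity to each Gamma component, precisely as in Appendix VI. For a single component, the change of variable $z = w/(I+N_0)$ turns $\phi_n(w)\,\mathrm{d}w$ into $\frac{\beta^{2\mu+2n}}{\Gamma(2\mu+2n)} z^{2\mu+2n-1}(I+N_0)^{2\mu+2n}\mathrm{e}^{-\beta(I+N_0)z}\,\mathrm{d}z$. Writing $(I+N_0)^{2\mu+2n}\mathrm{e}^{-\beta(I+N_0)z} = \frac{(-1)^{2\mu+2n}}{\beta^{2\mu+2n}}\frac{\mathrm{d}^{2\mu+2n}}{\mathrm{d}z^{2\mu+2n}}\mathrm{e}^{-\beta(I+N_0)z}$ and integrating by parts $2\mu+2n$ times moves the derivatives onto $z^{2\mu+2n-1}g(z)$; the boundary contributions collapse to $a_n g(0)$, while the bulk term yields $\int_0^{\infty} g_{2\mu+2n}(z)\,\mathbb{E}_I\!\left[\mathrm{e}^{-\beta I z}\right]\mathrm{e}^{-\beta N_0 z}\,\mathrm{d}z = \int_0^{\infty} g_{2\mu+2n}(z)\,\mathcal{L}_I(\beta z)\,\mathrm{e}^{-\beta N_0 z}\,\mathrm{d}z$, with $g_{2\mu+2n}$ as defined in (\ref{eq_secIV-3.004}) and $\beta N_0 = \frac{2\mu h N_0}{\bar{w}}$. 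Summing over $n$ and adding the boundary terms then reproduces the first line of (\ref{eq_secIV-3.003}). Finally, the second line follows by an $M$-point Gauss--Laguerre quadrature of each integral with $\alpha = \frac{2\mu h N_0}{\bar{w}}$: the node map $z = x_m/\alpha = \frac{\bar{w}x_m}{2\mu h N_0}$ sends the Laplace argument to $x_m/N_0$ and contributes the Jacobian $\frac{\bar{w}}{2\mu h N_0}$, producing the weights $c_m$, the abscissae $x_m$, and the tail remainder $R_M$.

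The main obstacle is the non-integer order $2\mu+2n$ of the derivative defining $g_{2\mu+2n}$: for general $\mu$ the ``$(2\mu+2n)$-fold integration by parts'' must be read through a fractional (Weyl) derivative, and I would need to confirm that the associated boundary terms still vanish apart from the single surviving $g(0)$ contribution, and that the fractional integration-by-parts identity remains valid against the analytic integrand $g$. This is the same technical point that already arises for $g_{\mu+n}$ in Theorem 1, so I would import its treatment from Appendix VI essentially verbatim. A secondary point is the term-by-term interchange of the Bessel series with the expectation over $I$ and the $z$-integration, which I would justify by dominated convergence using the exponential decay of the integrand and the absolute convergence of $\sum_n a_n = 1$.
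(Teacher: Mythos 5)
Your proposal is correct and follows essentially the same route as the paper: the paper's proof of Theorem 2 simply invokes the Theorem 1 machinery (Appendix VI) applied to the $\eta-\mu$ Gamma-mixture, and its only stated content is precisely the two identities you verify, namely the reduction of the mixture weights to $a_n = \binom{n+\mu-1}{n}H^{2n}/h^{\mu+2n}$ (via the Legendre duplication formula) and $\sum_{n\geq 0} a_n = 1$ via the binomial series together with $h^2 - H^2 = h$. Your handling of the non-integer order $2\mu+2n$ is, if anything, more careful than the paper's, which performs the ``$\mu+n$-fold'' integration by parts without comment on real-valued $\mu$.
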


  \begin{proof}
  The proof of Theorem 2 is similar to that Theorem 1, where we used the following relations
    \begin{equation}
    \begin{split}
      a_n = &\frac{\sqrt{\pi}}{\Gamma(\mu) 2^{2\mu-1} h^{\mu}}
      \frac{\Gamma(2\mu+2n)}{\Gamma(n+1) \Gamma\left(\mu+n+\frac{1}{2}\right)} \left( \frac{H}{2 h} \right)^{2n}
      = \binom{n+\mu-1}{n} \frac{H^{2n}}{h^{\mu + 2n}},\\
      &\sum_{n = 0}^{\infty} a_n = \frac{1}{h^{\mu}} \sum_{n = 0}^{\infty} \left( -1 \right)^n
      \binom{-\mu}{n} \frac{H^{2n}}{h^{2n}} = \frac{1}{\left( h - \frac{H^2}{h} \right)^{\mu}} = 1,
    \end{split}
    \label{eq_app_IX-ext-001}
    \end{equation}
  by using (\ref{eq_app_III-007}) in the first expression and
  applying (\ref{eq_app_III-010}) with $H^2 = h(h-1)$ in the second expression.
  \end{proof}

  We note that Theorems 1 and 2 make no assumption on the underlying distribution of the constituent interference channels. Therefore they can be applied even when the intended and interfering links are described by different fading models.

  \section{Performance Evaluation and Numerical Results}

  In this section, we apply Theorems 1 and 2 to an overlaid D2D network. We evaluate the average rate and bit error probability of the cellular and D2D UEs and compare their performance using a series of numerical evaluations.

  \subsection{Average Rate}

  The transmission rate of an overlaid D2D network is determined in part by the spectrum partition factor $\beta$. Since $\beta$ is the fraction of the available spectrum allocated for D2D transmission, the rate of the potential D2D UEs operating in the D2D mode is $\hat{R}_d = \beta C_d$, where $C_d$ denotes the spectral efficiency of a D2D link. Since the D2D mode uses an ALOHA medium access strategy with transmit probability $\varepsilon$, the spectral efficiency of a D2D link is given by
    \begin{equation}
    \begin{split}
      C_d = \varepsilon \mathbb{E}\left[ \log\left( 1 + \frac{W}{I_d + N_0}\right) \right].
    \end{split}
    \label{eq_secV-1.001}
    \end{equation}

  Similarly, the rate of a cellular UE is ${R}_c = \left( 1- \beta \right) C_c$, where $C_c$ represents the spectral efficiency of a cellular link. Due to the orthogonal multiple access, only one cellular UE accesses the cellular link at a time and the spectral efficiency of a cellular link is consequently given by
    \begin{equation}
    \begin{split}
      C_c = \mathbb{E}\left[ \frac{1}{N} \right] \mathbb{E}\left[ \log\left( 1 + \frac{W}{I_c + N_0}\right) \right]
      = \frac{\lambda_b}{\lambda_c}\left( 1 - \mathrm{e}^{-\frac{\lambda_b}{\lambda_c}}\right)
      \mathbb{E}\left[ \log\left( 1 + \frac{W}{I_c + N_0}\right) \right],
    \end{split}
    \label{eq_secV-1.002}
    \end{equation}
  where $N$ is the number of potential cellular UEs within the cell $\mathcal{A}$ and the average of $1/N$ is evaluated in \cite{Lin2014} as
  $\mathbb{E}\left[ \frac{1}{N} \right] = \frac{\lambda_b}{\lambda_c}\left( 1 - \mathrm{e}^{-\frac{\lambda_c}{\lambda_b}}\right)$.

  Based on the mode selection scheme, the potential D2D UE may choose either cellular or D2D mode. If D2D mode is selected, the rate of potential D2D UE $R_d$ is given by $R_d = \hat{R}_d$, whereas if cellular mode is selected, $R_d = R_c$. Hence, the average rates of a potential D2D UE $R_d$ can be calculated by using total probability as shown in the following theorem.

  \begin{thm}
   For an overlaid D2D network, the average rates of a cellular UE $R_c$, a potential D2D UE operating in D2D mode $\hat{R}_d$, and a potential D2D UE $R_d$ are given by
    \begin{equation}
    \begin{split}
      R_c &= \left( 1- \beta \right) \frac{\lambda_b}{\lambda_c}\left( 1 - \mathrm{e}^{-\frac{\lambda_c}{\lambda_b}}\right)
      \mathbb{E}\left[ \log\left( 1 + \frac{W}{I_c + N_0}\right) \right],\\
      \hat{R}_d &= \beta \varepsilon \mathbb{E}\left[ \log\left( 1 + \frac{W}{I_d + N_0}\right) \right],\\
      R_d &= R_c \mathrm{P}\left( L_d > \theta \right) + \hat{R}_d \mathrm{P}\left( L_d \leq \theta \right),
    \end{split}
    \label{eq_secV-1.003}
    \end{equation}
  where $\beta \in \left[ 0, 1\right]$ is the spectrum partition factor and $\mathrm{P}\left( L_d \leq \theta\right)$ as calculated in (\ref{eq_secII-3.001-ext1}). The average of the logarithm function can be evaluated using Theorems 1, and 2 as
  \begin{equation}
  \begin{split}
  \mathbb{E}\left[ \log\left( 1 + \frac{W}{I + N_0}\right) \right]
  &=
  \frac{\bar{w}}{\mu(1+\kappa) N_0} \sum_{n = 0}^{\infty}
  \frac{\left( \mu \kappa \right)^n}{n! ~ \mathrm{e}^{\mu \kappa}}
  \int_{0}^{\infty}
  g_{\mu+n}\left( \frac{\bar{w} x}{\mu(1+\kappa) N_0} \right)
  \mathcal{L}_{I}\left( \frac{x}{N_0}\right)
  \mathrm{e}^{-x}
  \mathrm{d}x\\
  &\simeq
  \frac{\bar{w}}{\mu(1+\kappa) N_0} \sum_{m=1}^{M}
  \sum_{n = 0}^{\infty} \frac{c_m \left( \mu \kappa \right)^n}{n! ~ \mathrm{e}^{\mu \kappa}}
  g_{\mu+n}\left( \frac{\bar{w} x_m}{\mu(1+\kappa) N_0} \right)
  \mathcal{L}_{I}\left( \frac{x_m}{N_0}\right),
  \end{split}
  \label{eq_secV-1.004-a}
  \end{equation}
  for a $\kappa-\mu$ distributed signal envelope $\sqrt{G_i} = \sqrt{W}$, and
  \begin{equation}
  \begin{split}
  \mathbb{E}\left[ \log\left( 1 + \frac{W}{I + N_0}\right) \right]
  &=
  \frac{\bar{w}}{2 \mu h N_0} \sum_{n = 0}^{\infty}
  a_n
  \int_{0}^{\infty}
  g_{2 \mu+ 2 n}\left( \frac{\bar{w} x}{2 \mu h N_0} \right)
  \mathcal{L}_{I}\left( \frac{x}{N_0}\right)
  \mathrm{e}^{-x}
  \mathrm{d}x\\
  &\simeq
  \frac{\bar{w}}{2\mu h N_0}
  \sum_{m = 1}^{M}
  \sum_{n=0}^{\infty} a_n c_m g_{2\mu+2n}\left( \frac{\bar{w} x_m}{2\mu h N_0} \right)
  \mathcal{L}_{I}\left( \frac{x_m}{N_0} \right),
  \end{split}
  \label{eq_secV-1.004-b}
  \end{equation}
  for a $\eta-\mu$ distributed signal envelope, where $a_n$ is defined in (\ref{eq_secIV-3.004}), $c_m$ and $x_m$ are the $m$-th weight and abscissa of the $M$-th order Laguerre polynomial, respectively. Note that we have omitted $R_M$ in (\ref{eq_secIV-3.001}), and (\ref{eq_secIV-3.003}) since it rapidly converges to zero \cite{Gradshteyn1994}. The Laplace transforms of the interference $\mathcal{L}_{I}(s)$ for D2D and cellular link are derived in (\ref{eq_secIII-2.002}) and (\ref{eq_secIII-3.002}), respectively. The derivative terms $g_i(z)$ in (\ref{eq_secV-1.004-a}) and (\ref{eq_secV-1.004-b}) are of the form $g_i(z) = \frac{1}{z} \left( 1 - \frac{1}{(1+z)^i}\right)$  following from \cite[eq. 18]{Hamdi2007}.
  \end{thm}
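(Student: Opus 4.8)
The plan is to build the three rate formulas from the spectral-efficiency definitions already in hand and then to specialize Theorems 1 and 2 to the logarithmic reward $g(x)=\log(1+x)$. First I would observe that, by definition, the rate of a potential D2D UE operating in D2D mode is the allocated spectrum fraction $\beta$ times the D2D spectral efficiency $C_d$ of (\ref{eq_secV-1.001}), which immediately yields $\hat{R}_d=\beta\varepsilon\,\mathbb{E}[\log(1+W/(I_d+N_0))]$. In the same way the cellular rate is $R_c=(1-\beta)C_c$ with $C_c$ taken from (\ref{eq_secV-1.002}); substituting the closed form of $\mathbb{E}[1/N]$ gives the stated $R_c$.

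The expression for $R_d$ then follows from a total-probability argument over the distance-based mode-selection rule. A potential D2D UE transmits in D2D mode exactly when $L_d\le\theta$, in which case its rate equals $\hat{R}_d$, and otherwise it reverts to cellular mode with rate $R_c$. Since these events are complementary with probabilities $\mathrm{P}(L_d\le\theta)$ and $\mathrm{P}(L_d>\theta)$ obtained from (\ref{eq_secII-3.001-ext1}), conditioning on the selected mode produces $R_d=R_c\,\mathrm{P}(L_d>\theta)+\hat{R}_d\,\mathrm{P}(L_d\le\theta)$.

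The substantive step is evaluating $\mathbb{E}[\log(1+W/(I+N_0))]$, which I would obtain by invoking Theorem 1 when $\sqrt{W}$ is $\kappa-\mu$ distributed and Theorem 2 when it is $\eta-\mu$ distributed, taking $g(x)=\log(1+x)$. Because $g(0)=\log 1=0$, the boundary terms $g(0)$ in (\ref{eq_secIV-3.001}) and $g(0)\sum_n a_n$ in (\ref{eq_secIV-3.003}) vanish; discarding the rapidly decaying remainder $R_M$ then leaves precisely the truncated series-plus-Gauss--Laguerre expressions (\ref{eq_secV-1.004-a}) and (\ref{eq_secV-1.004-b}). The last ingredient is to reduce the derivative functional $g_i(z)=\frac{1}{\Gamma(i)}\frac{\mathrm{d}^i}{\mathrm{d}z^i}\bigl[z^{i-1}\log(1+z)\bigr]$ to elementary form; applying the general-order Leibniz rule, equivalently Hamdi's identity in \cite[eq. 18]{Hamdi2007}, gives $g_i(z)=\frac{1}{z}\bigl(1-(1+z)^{-i}\bigr)$, which one can sanity-check at $i=1,2$. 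I expect this derivative identity to be the only genuinely computational part, and the main subtlety is that the order $i=\mu+n$ (or $2\mu+2n$) need not be an integer, so the reduction must rely on the general-order form of Leibniz's rule rather than a finite binomial sum.
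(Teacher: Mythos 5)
Your proposal is correct and follows essentially the same route as the paper: the three rate formulas come directly from the spectral-efficiency definitions in (\ref{eq_secV-1.001})--(\ref{eq_secV-1.002}) plus total probability over the distance-based mode selection, and the logarithm averages are obtained by specializing Theorems 1 and 2 to $g(x)=\log(1+x)$ (so the $g(0)$ boundary terms vanish), discarding $R_M$, and invoking \cite[eq. 18]{Hamdi2007} for $g_i(z)=\frac{1}{z}\left(1-\frac{1}{(1+z)^i}\right)$. Your additional remark that the order $i=\mu+n$ (or $2\mu+2n$) is generally non-integer, so the reduction of $g_i$ requires a general-order argument rather than a finite Leibniz sum, is a point of care the paper glosses over by citation.
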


  In the following, we consider the special cases of the $\kappa-\mu$ and $\eta-\mu$ fading models and compare the rate for different fading conditions.
  \subsubsection{Nakagami-\textit{m}} As indicated in Table \ref{tab_channel1}, Nakagami-\textit{m} fading can be obtained from $\kappa-\mu$ fading by setting $\kappa \rightarrow 0$, $\mu = m$ or from $\eta-\mu$ by setting either $\eta = 1, \mu = m/2$ or $\eta \rightarrow 0, \mu = m$. The average rates of an overlaid D2D networks are given in (\ref{eq_secV-1.003}), where the average of the logarithm can be simplified for Nakagami-\textit{m} as follows
   \begin{equation}
    \begin{split}
    \mathbb{E}\left[ \log\left( 1 + \frac{W}{I+N_0}\right)\right] &=
    \frac{\bar{w}}{m N_0}
      \int_{0}^{\infty} g_{m}\left( \frac{\bar{w} x}{m N_0} \right)
      \mathcal{L}_{I}\left( \frac{x}{N_0} \right)
      \mathrm{e}^{-x} \mathrm{d}x,
    \end{split}
    \label{eq_secV-1.014}
    \end{equation}
  where $g_{m}(z) = \frac{1}{z} \left( 1 - \frac{1}{(1+z)^m}\right)$ and a detailed proof of (\ref{eq_secV-1.014}) is provided in Appendix VII.   For the D2D and cellular links, the Laplace transform $\mathcal{L}_{I}(s)$ in (\ref{eq_secV-1.014}) is respectively given as below
   \begin{equation}
    \begin{split}
    \mathcal{L}_{I_d}(s) &= \exp\left( -    \frac{q \varepsilon \lambda }{\hat{\xi} \mathrm{sinc}\left(\delta_d\right)} \cdot
      \left( \frac{\bar{w}}{m} \right)^{\delta_d} \cdot
      \binom{m+\delta_d-1}{\delta_d}
       s^{\delta_d}\right),\\
    \mathcal{L}_{I_c}(s) &=
    \exp\left( -\frac{2}{R^2} \int_{R}^{\infty} \left( 1 -
  \Hypergeometric{2}{1}{m, \delta_c}{1 + \delta_c}{-\frac{s R^{\tau_c} {r}^{-\tau_c} \bar{w}}{m}}
   \right) r \mathrm{d}r  \right).
    \end{split}
    \label{eq_secV-1.015}
    \end{equation}

  \subsubsection{Rayleigh} Rayleigh fading can be obtained from $\kappa-\mu$ fading by setting $\kappa \rightarrow 0$, $\mu = 1$ or from $\eta-\mu$ by setting $\eta = 1, \mu =0.5$. Then, by substituting $m = 1$ in (\ref{eq_secV-1.014}), the average of the logarithm can be simplified for Rayleigh as follows
   \begin{equation}
    \begin{split}
    \resizebox{0.91\hsize}{!}{%
    $\mathbb{E}\left[ \log\left( 1 + \frac{W}{I+N_0}\right)\right] =
    \frac{\bar{w}}{\mu N_0}
      \int_{0}^{\infty} g^{'}\left( \frac{\bar{w} x}{\mu N_0} \right)
      \mathcal{L}_{I}\left( \frac{x}{N_0} \right)
      \mathrm{e}^{-x} \mathrm{d}x
     =  \int_{0}^{\infty}
      \mathcal{L}_{I}\left( \frac{\mathrm{e}^{t}-1}{\bar{w}} \right)
      \mathrm{e}^{-\frac{(\mathrm{e}^{t}-1) N_0}{\bar{w}}} \mathrm{d}t,$
      }
    \end{split}
    \label{eq_secV-1.012}
    \end{equation}
    where we used a change of variables $t = \log\left( 1 + \frac{\bar{w}}{N_0}x \right)$ with
    $g^{'} = (1 + x)^{-1}$ to achieve the second equality of (\ref{eq_secV-1.012}).
    The last expression in (\ref{eq_secV-1.012}) matches the well-known result for Rayleigh fading in \cite{Haenggi2013}, validating our generalized approach in (\ref{eq_secIV-3.001}) and (\ref{eq_secIV-3.003}).
    For the D2D and cellular links, the Laplace transform $\mathcal{L}_{I}(s)$ in (\ref{eq_secV-1.012}) is respectively given as below by substituting $m = 1$ in (\ref{eq_secV-1.015}),
   \begin{equation}
    \begin{split}
    \mathcal{L}_{I_d}(s) &= \exp\left( -\frac{q \varepsilon \lambda \bar{w}^{\delta_d}}{\hat{\xi} \mathrm{sinc}\left(\delta_d\right)} s^{\delta_d}\right),\\
    \mathcal{L}_{I_c}(s) &=
    \exp\left( -\frac{2}{R^2} \int_{R}^{\infty} \left( 1 -
  \Hypergeometric{2}{1}{1, \delta_c}{1 + \delta_c}{-{s R^{\tau_c} {r}^{-\tau_c} \bar{w}}} \right) r \mathrm{d}r  \right).
    \end{split}
    \label{eq_secV-1.013}
    \end{equation}

  \subsubsection{One-sided Gaussian} One-sided Gaussian fading can be obtained from $\kappa-\mu$ fading by setting $\kappa \rightarrow 0$, $\mu = 0.5$ or from $\eta-\mu$ by setting $\eta \rightarrow 0$, or $\eta \rightarrow \infty$ with $\mu = 0.5$. The average of the logarithm in (\ref{eq_secV-1.003}) can be simplified for One-sided Gaussian by setting $\mu = 0.5$ in (\ref{eq_secV-1.014}) as follows
   \begin{equation}
    \begin{split}
    \mathbb{E}\left[ \log\left(1 + \frac{W}{I+N_0}\right)\right] &=
    \frac{2 \bar{w}}{N_0}
      \int_{0}^{\infty} g_{0.5}\left( \frac{2\bar{w} x}{N_0} \right)
      \mathcal{L}_{I}\left( \frac{x}{N_0} \right)
      \mathrm{e}^{-x} \mathrm{d}x,
    \end{split}
    \label{eq_secV-1.022}
    \end{equation}
  and the Laplace transform $\mathcal{L}_{I}(s)$ in (\ref{eq_secV-1.022}) is given by
   \begin{equation}
    \begin{split}
    \mathcal{L}_{I_d}(s) &= \exp\left( -
    \frac{q \varepsilon \lambda \left( 2 \bar{w}\right)^{\delta_d}}{\hat{\xi} \mathrm{sinc}\left(\delta_d\right)}
       \cdot
      \frac{\Gamma(\delta_d + 0.5)}{\sqrt{\pi} \Gamma(\delta_d + 1)}
       s^{\delta_d}\right),\\
    \mathcal{L}_{I_c}(s) &=
    \exp\left( -\frac{2}{R^2} \int_{R}^{\infty} \left( 1 -
  \Hypergeometric{2}{1}{\frac{1}{2}, \delta_c}{1 + \delta_c}{-{2 s R^{\tau_c} {r}^{-\tau_c} \bar{w}}}
   \right) r \mathrm{d}r  \right),
    \end{split}
    \label{eq_secV-1.023}
    \end{equation}
    for the D2D and cellular links, respectively,
    where we applied (\ref{eq_app_III-007}).

  \subsubsection{Rician} Rician fading can be obtained from $\kappa-\mu$ fading by setting $\mu = 1$.
  The average of the logarithm in (\ref{eq_secV-1.003}) is derived by setting $\mu = 1$ in (\ref{eq_secV-1.004-a}) as follows
   \begin{equation}
    \begin{split}
    \mathbb{E}\left[ \log\left( 1 + \frac{W}{I+N_0}\right)\right] &=
    \sum_{n = 0}^{\infty} \frac{\left( \kappa \right)^n}{n! ~\mathrm{e}^{\kappa}}
      \int_{0}^{\infty} g_{1+n}\left( z\right)
      \mathcal{L}_{I}\left( \frac{1+\kappa}{\bar{w}}z\right)
      \mathrm{e}^{-\frac{(1+\kappa)}{\bar{w}}z} \mathrm{d}z,
    \end{split}
    \label{eq_secV-1.016}
    \end{equation}
  and the Laplace transform $\mathcal{L}_{I}(s)$ in (\ref{eq_secV-1.016}) is given by
   \begin{equation}
    \begin{split}
    \mathcal{L}_{I_d}(s) &= \exp\left( -
    \frac{q \varepsilon \lambda }{\hat{\xi} \mathrm{e}^{\kappa}} \cdot
     \frac{ \Hypergeometric{1}{1}{1+\delta_d}{1}{\kappa}}{\mathrm{sinc}\left(\delta_d\right)}
     \cdot
     \left( \frac{\bar{w}}{1 + \kappa} \right)^{\delta_d}
       s^{\delta_d}\right),\\
    \mathcal{L}_{I_c}(s) &=
    \exp\left( -\frac{2}{R^2} \int_{R}^{\infty} \left( 1 -
   \varphi(r)
   \right) r \mathrm{d}r  \right),
   \end{split}
    \label{eq_secV-1.017}
    \end{equation}
    for the D2D and cellular links, respectively,
    where
   \begin{equation}
    \begin{split}
  \varphi(r) &=
  \frac{\delta_c}{\mathrm{e}^{\kappa}}\left( \frac{r}{R}\right)^2
  \left( \frac{1 + \kappa}{s \bar{w}} \right)^{\delta_c}
  \sum_{n = 0}^{\infty} \frac{\kappa^n}{n! n!}
  G_{2,2}^{2,1}\left( \frac{1 + \kappa}{s \bar{w}}\left( \frac{r}{R}\right)^{\tau_c}
  \Bigg\vert { \mycom{1-\delta_c, 1}{0, n+1-\delta_c} } \right).
   \end{split}
    \label{eq_secV-1.018}
    \end{equation}

  \subsubsection{Hoyt (Nakagami-\textit{q})} Hoyt fading is a special case of $\eta-\mu$ fading which is obtained by setting $\mu =0.5$. Then, by applying (\ref{eq_app_III-007}) , the average of the logarithm in (\ref{eq_secV-1.003}) is simplified as follows
   \begin{equation}
    \begin{split}
    \mathbb{E}\left[ \log\left( 1 + \frac{W}{I+N_0}\right)\right] &=
    \sum_{n = 0}^{\infty} \frac{\Gamma(n+0.5)}{\sqrt{\pi} ~ n!} \frac{\bar{w} H^{2n}}{N_0 h^{2n + 1.5}}
    \int_{0}^{\infty} g_{2n + 1}\left( \frac{\bar{w} x}{h N_0} \right)
    \mathcal{L}_{I}\left( \frac{x}{N_0} \right) \mathrm{e}^{-x} \mathrm{d}x,
    \end{split}
    \label{eq_secV-1.019}
    \end{equation}
  and the Laplace transform $\mathcal{L}_{I}(s)$ in (\ref{eq_secV-1.019}) is given by
   \begin{equation}
    \begin{split}
    \mathcal{L}_{I_d}(s) &= \exp\left( -
  	\frac{q \varepsilon \lambda }{\hat{\xi} ~h^{0.5} ~\mathrm{sinc}\left(\delta_d\right)} \cdot
  	\Hypergeometric{2}{1}{1+\frac{\delta_d}{2}, \frac{1}{2} +
      \frac{\delta_d}{2}}{1}{\left(\frac{H}{h}\right)^2}
     \cdot
     \left( \frac{\bar{w}}{1 + \kappa} \right)^{\delta_d}
       s^{\delta_d}
  	\right),\\
    \mathcal{L}_{I_c}(s) &=
    \exp\left( -\frac{2}{R^2} \int_{R}^{\infty} \left( 1 -
   \varphi(r)
   \right) r \mathrm{d}r  \right),
   \end{split}
    \label{eq_secV-1.020}
    \end{equation}
    for the D2D and cellular links, respectively,
    where
   \begin{equation}
    \begin{split}
  \varphi(r) &=
  \frac{\delta_c}{h^{0.5}}\left( \frac{r}{R}\right)^2
  \left( \frac{h}{s \bar{w}} \right)^{\delta_c}
  \sum_{n = 0}^{\infty} \frac{2^{-2n}}{n! n!} \left( \frac{H}{h}\right)^{2n}
  G_{2,2}^{2,1}\left( \frac{h}{s \bar{w}}
  \left( \frac{r}{R}\right)^{\tau_c}
  \Bigg\vert { \mycom{1-\delta_c, 1}{0, 2n+1-\delta_c} } \right).
   \end{split}
    \label{eq_secV-1.021}
    \end{equation}

  The rate analysis in (\ref{eq_secV-1.012})-(\ref{eq_secV-1.023}) are based on the common assumption
  of an identical fading distribution across the intended and interfering links.
  We note that Theorem 3 can be applied to the general case when different fading distributions affect
  the intended and interfering links, as indicated in Table \ref{tab_II}.
  For example, if the received signal envelope of the intended link is $\kappa-\mu$ distributed and
  that of the interference link is $\eta-\mu$ distributed, then the average of the
  logarithm in (\ref{eq_secV-1.004-a}) can be evaluated by using the Laplace transform $\mathcal{L}_{I}(s)$
  in (\ref{eq_secIII-2.002}) and (\ref{eq_secIII-2.003-b}) for D2D link, or
  (\ref{eq_secIII-3.002}) and (\ref{eq_secIII-3.003-b}) for cellular link, respectively.

  \begin{table}[!t]
  \centering
  \caption{Performance Evaluation over Multiple Fading Combinations.}
  \label{tab_II}
  \resizebox{\textwidth}{!}{%
  \begin{tabular}{c|c|c|c|c|c}
  \hline
  Intended Link  & Interference Link
  & $\mathbb{E}\left[ \log\left( 1 + \gamma\right)\right]$
  & $\mathbb{E}\left[ {\Gamma\left(b, a\gamma \right)}/{2 \Gamma(b)} \right]$
  & $\mathcal{L}_{I_d}(s)$: D2D Link   & $\mathcal{L}_{I_c}(s)$: Cellular Link   \\ \hline
  $\kappa-\mu$ & $\kappa-\mu$ & \multirow{2}{*}{(\ref{eq_secV-1.004-a})}
  & \multirow{2}{*}{(\ref{eq_secV-2.003-a})}
  & (\ref{eq_secIII-2.002}) \& (\ref{eq_secIII-2.003-a})
  & (\ref{eq_secIII-3.002}) \& (\ref{eq_secIII-3.003-a}) \\ \cline{1-2} \cline{5-6}
  $\kappa-\mu$ & $\eta-\mu$ &                     &
  & (\ref{eq_secIII-2.002}) \& (\ref{eq_secIII-2.003-b})
  & (\ref{eq_secIII-3.002}) \& (\ref{eq_secIII-3.003-b}) \\ \hline
  $\eta-\mu$ & $\kappa-\mu$ & \multirow{2}{*}{(\ref{eq_secV-1.004-b})}
  & \multirow{2}{*}{(\ref{eq_secV-2.003-b})}
  & (\ref{eq_secIII-2.002}) \& (\ref{eq_secIII-2.003-a})
  & (\ref{eq_secIII-3.002}) \& (\ref{eq_secIII-3.003-a}) \\ \cline{1-2} \cline{5-6}
  $\eta-\mu$ & $\eta-\mu$ &                     &
  & (\ref{eq_secIII-2.002}) \& (\ref{eq_secIII-2.003-b})
  & (\ref{eq_secIII-3.002}) \& (\ref{eq_secIII-3.003-b}) \\ \hline
  \end{tabular}
  }
  \end{table}

  \subsection{Average Bit Error Probability}

  Within the stochastic geometry framework, numerous studies have been conducted to evaluate the average BEP or symbol error probability (SEP). In \cite{Renzo2014, DiRenzo2014}, the authors used the equivalent-in-Distribution (EiD) approach which can treat the aggregate interference at the fine signal level and incorporate important communication attributes, such as modulation scheme and constellation size, into the modeling process. However, the mathematical framework of EiD approach has been developed under the assumption of a Rayleigh fading environment, making it difficult to utilize in generalized fading conditions.

  In this paper, to circumvent the dependency of the EiD approach upon Rayleigh fading, we use the alternative method proposed by \cite{Hamdi2007} (Theorem 1 and 2). The conditional bit error probability for instantaneous SINR $\gamma$ is evaluated as $\frac{\Gamma\left(b, a \gamma \right)}{2 \Gamma(b)}$ \cite{Wojnar1986},  where $a$ denotes the modulation type, $b$ represents the detection type\footnote{$a = \frac{1}{2}$ for orthogonal frequency shift keying (FSK), $a = 1$ for antipodal phase shift keying (PSK), $b = \frac{1}{2}$ for coherent detection, and $b = 1$ for non-coherent detection.}, and $\Gamma(a, x)$ is the upper incomplete gamma function. Then, the average BEP of an overlaid D2D network is given via the following theorem.

  \begin{thm}
    For an overlaid D2D network, the average BEP of a cellular UE ${P}_{e, c}$, a potential D2D UE in D2D mode $\hat{P}_{e, d}$,
   and a potential D2D UE ${P}_{e, d}$ are given by \cite{Choi2003}
    \begin{equation}
    \begin{split}
      \hat{P}_{e, d} &= \mathbb{E}\left[ \frac{\Gamma\left(b, \frac{a W}{I_d + N_0} \right)}{2 \Gamma(b)} \right], \quad
      {P}_{e, c} = \mathbb{E}\left[ \frac{\Gamma\left(b, \frac{a W}{I_c + N_0} \right)}{2 \Gamma(b)} \right], \\
      {P}_{e, d} &= \frac{R_c ~\mathrm{P}\left( L_d > \theta \right) {P}_{e, c}
      + \hat{R}_d ~\mathrm{P}\left( L_d \leq \theta \right)  \hat{P}_{e, d}}
      {R_c ~\mathrm{P}\left( L_d > \theta \right) + \hat{R}_d ~\mathrm{P}\left( L_d \leq \theta \right)},
      \end{split}
    \label{eq_secV-2.002}
    \end{equation}
  where $R_c$ and $\hat{R}_d$ are derived in (\ref{eq_secV-1.003}) and
  $\mathrm{P}\left( L_d \leq \theta \right)$ is evaluated in (\ref{eq_secII-3.001-ext1}).
  The average term in (\ref{eq_secV-2.002}) can be calculated as follows
  by substituting $g(x) = \frac{\Gamma\left(b, a x \right)}{2 \Gamma(b)}$ in Theorem 1 and 2
    \begin{equation}
    \begin{split}
      \mathbb{E}\left[ \frac{\Gamma\left(b, \frac{a W}{I + N_0} \right)}{2 \Gamma(b)} \right]
      &= \frac{1}{2} + \frac{\bar{w}}{\mu(1+\kappa) N_0}
      \sum_{n = 0}^{\infty} \frac{\left( \mu \kappa \right)^n}{n! ~ \mathrm{e}^{\mu \kappa}}
      \int_{0}^{\infty}
      g_{\mu+n}\left( \frac{\bar{w} x}{\mu(1+\kappa) N_0} \right)
      \mathcal{L}_{I}\left( \frac{x}{N_0}\right)
      \mathrm{e}^{-x} \mathrm{d}x\\
      &\simeq
  	\frac{1}{2} + \frac{\bar{w}}{\mu(1+\kappa) N_0} \sum_{m=1}^{M}
      \sum_{n = 0}^{\infty} \frac{c_m \left( \mu \kappa \right)^n}{n! ~ \mathrm{e}^{\mu \kappa}}
      g_{\mu+n}\left( \frac{\bar{w} x_m}{\mu(1+\kappa) N_0} \right)
      \mathcal{L}_{I}\left( \frac{x_m}{N_0}\right),
    \end{split}
    \label{eq_secV-2.003-a}
    \end{equation}
  for a $\kappa-\mu$ distributed signal envelope $\sqrt{G_i}= \sqrt{W}$, and
    \begin{equation}
    \begin{split}
      \mathbb{E}\left[ \frac{\Gamma\left(b, \frac{a W}{I + N_0} \right)}{2 \Gamma(b)} \right]
      &= \frac{1}{2} + \frac{\bar{w}}{2\mu h N_0}
      \sum_{n = 0}^{\infty} a_n
      \int_{0}^{\infty}
      g_{2 \mu+ 2 n}\left( \frac{\bar{w} x}{2\mu h N_0} \right)
      \mathcal{L}_{I}\left( \frac{x}{N_0}\right)
      \mathrm{e}^{-x} \mathrm{d}x\\
      &\simeq
  	\frac{1}{2} + \frac{\bar{w}}{2\mu h N_0}
      \sum_{m = 1}^{M}
      \sum_{n=0}^{\infty} a_n c_m g_{2\mu+2n}\left( \frac{\bar{w} x_m}{2\mu h N_0} \right)
      \mathcal{L}_{I}\left( \frac{x_m}{N_0} \right),
    \end{split}
    \label{eq_secV-2.003-b}
    \end{equation}
    for a $\eta-\mu$ distributed signal envelope,
  where $a_n$ is defined in (\ref{eq_secIV-3.004}), $c_m$ and $x_m$ are the $m$-th weight and abscissa of the $M$-th order Laguerre polynomial, respectively. The Laplace transforms of the interference $\mathcal{L}_{I}(s)$ are derived in (\ref{eq_secIII-2.002}) and (\ref{eq_secIII-3.002}) for D2D and cellular link§§§§§§§, respectively. The derivative terms $g_i(z)$ in (\ref{eq_secV-2.003-a}) and (\ref{eq_secV-2.003-b})
  are evaluated as follows
    \begin{equation}
    \begin{split}
      g_i(x) &= \frac{1}{2 \Gamma(b) \Gamma(i)} \frac{\mathrm{d}^i}{\mathrm{d}x^i}\left( x^{i-1} \Gamma\left(b, a x \right) \right)
      = \frac{a}{2 \Gamma(b) \Gamma(i)}
      G_{3, 2}^{1, 2}\left( \frac{1}{a x}
      \Bigg\vert { \mycom{2, 2-b, 1}{1+i, 1} } \right),
      \end{split}
    \label{eq_secV-2.004}
    \end{equation}
  where we have applied (\ref{eq_app_III-001}), (\ref{eq_app_III-005}), and (\ref{eq_app_III-004}) in the last equality.
  \end{thm}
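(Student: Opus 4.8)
The plan is to assemble Theorem 4 in three stages: first the reduction of each average BEP to an expectation of the conditional error probability, then the mode-averaging that produces $P_{e,d}$, and finally the analytic evaluation of those expectations by specializing Theorems 1 and 2. The conditional BEP given the instantaneous SINR $\gamma$ is $\Gamma(b,a\gamma)/(2\Gamma(b))$ from \cite{Wojnar1986}, so $\hat{P}_{e,d}$ and $P_{e,c}$ in (\ref{eq_secV-2.002}) are obtained simply by averaging this quantity over the fading $W$ and the aggregate interference of the D2D (respectively cellular) link, i.e. setting $\gamma = W/(I_d+N_0)$ and $\gamma = W/(I_c+N_0)$, whose interference Laplace transforms are already available from (\ref{eq_secIII-2.002}) and (\ref{eq_secIII-3.002}).

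Next, for a potential D2D UE I would invoke the rate-weighted total-probability argument of \cite{Choi2003}. Such a UE transmits in D2D mode with probability $\mathrm{P}(L_d \le \theta)$ at rate $\hat{R}_d$ with error probability $\hat{P}_{e,d}$, and in cellular mode with probability $\mathrm{P}(L_d>\theta)$ at rate $R_c$ with error probability $P_{e,c}$. Counting the expected number of erroneous bits as the product of rate and BEP in each mode and dividing by the expected number of transmitted bits yields exactly the ratio in the third line of (\ref{eq_secV-2.002}); the rates $R_c$, $\hat{R}_d$ are supplied by (\ref{eq_secV-1.003}) and $\mathrm{P}(L_d\le\theta)$ by (\ref{eq_secII-3.001-ext1}).

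The analytic core is to evaluate the two expectations by specializing Theorems 1 and 2 to $g(x)=\Gamma(b,ax)/(2\Gamma(b))$. This $g$ is analytic in $x$, so both theorems apply directly. Since $\Gamma(b,0)=\Gamma(b)$ we have $g(0)=\tfrac{1}{2}$, which for the $\kappa$-$\mu$ case produces the leading $\tfrac{1}{2}$ in (\ref{eq_secV-2.003-a}), and for the $\eta$-$\mu$ case combines with $\sum_n a_n = 1$ from (\ref{eq_app_IX-ext-001}) to give the same constant in (\ref{eq_secV-2.003-b}); the remaining infinite series together with the Gauss--Laguerre quadrature terms are inherited verbatim from (\ref{eq_secIV-3.001}) and (\ref{eq_secIV-3.003}).

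The last and hardest step is the closed form of the derivative block $g_i(x)$ in (\ref{eq_secV-2.004}). I would first write $\Gamma(b,ax)$ as a Meijer $G$-function via (\ref{eq_app_III-001}), absorb the factor $x^{i-1}$, and then apply the $G$-function differentiation identities (\ref{eq_app_III-005}) and (\ref{eq_app_III-004}) to collapse the $i$-fold derivative into a single $G_{3,2}^{1,2}$ with shifted parameters and inverted argument $1/(ax)$. I expect this Meijer-$G$ bookkeeping to be the main obstacle: the orders and degrees of the $G$-function and the parameter shifts must be tracked exactly, since any slip in the index arithmetic or in the argument inversion would propagate through the quadrature and corrupt the final BEP expression. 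The remaining steps follow directly from the definitions and from Theorems 1--3 already established.
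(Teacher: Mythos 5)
Your proposal follows essentially the same route as the paper: the conditional BEP $\Gamma(b,a\gamma)/(2\Gamma(b))$ from \cite{Wojnar1986} averaged over $W$ and $I_d$ (resp.\ $I_c$), the rate-weighted total-probability combination from \cite{Choi2003} for $P_{e,d}$, substitution of $g(x)=\Gamma(b,ax)/(2\Gamma(b))$ into Theorems 1 and 2 with $g(0)=\tfrac{1}{2}$ (using $\sum_n a_n=1$ in the $\eta$-$\mu$ case), and the Meijer-G evaluation of $g_i(x)$ using the identities (\ref{eq_app_III-001}), (\ref{eq_app_III-005}), and (\ref{eq_app_III-004}). The only slip is cosmetic: the representation of $\Gamma(b,ax)$ as a $G$-function is (\ref{eq_app_III-005}) (not (\ref{eq_app_III-001}), which is the multiplication rule used to absorb $x^{i-1}$), and (\ref{eq_app_III-005}) is not a differentiation identity — you swapped the roles of the identities in your description, but the set you invoke and the order of operations match the paper's derivation.
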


  Similar to the rate analysis in (\ref{eq_secV-1.012})-(\ref{eq_secV-1.023}), the average BEP can be evaluated for the special cases of the $\kappa-\mu$ and $\eta-\mu$ fading models by substituting $g(x) = \frac{\Gamma\left(b, a x \right)}{2 \Gamma(b)}$ into Appendix VII and Theorem 4. If different fading distributions impact the intended and interfering links, then the appropriate measures should be used
  for evaluating the average BEP, as given in Table \ref{tab_II}.
  For example, if the received signal envelope of the intended link is $\eta-\mu$ distributed and
  that of the interference link is $\kappa-\mu$ distributed, then the average
  term $\mathbb{E}\left[ {\Gamma\left(b, a\gamma \right)}/{2 \Gamma(b)} \right]$ in
  (\ref{eq_secV-2.003-b}) can be evaluated by using the Laplace transform $\mathcal{L}_{I}(s)$
  in (\ref{eq_secIII-2.002}) and (\ref{eq_secIII-2.003-a})  for the D2D link, or
  (\ref{eq_secIII-3.002}) and (\ref{eq_secIII-3.003-a}) for the cellular link, respectively.

  \subsection{Numerical Results}

  In the following, we compare numerical results for different fading distributions. All of the simulations were carried out using the following parameters: BS node intensity $\lambda_b = \frac{1}{\pi 500^2}$, UE node intensity $\lambda = \frac{10}{\pi 500^2}$, ALOHA transmit probability $\varepsilon = 0.8$, D2D distance parameter $\xi = \frac{10}{\pi 500^2}$, path-loss exponent $\tau_c = \tau_d = 4$, spectrum partition factor $\beta = 0.2$, mode selection threshold $\theta = 100 m$, and  probability of potential D2D UEs $q = 0.2$. We compare the average rate versus the SNR for different fading environments when the intended UE is transmitting in D2D mode (\figref{fig_rate_special_d}) or cellular mode (\figref{fig_rate_special_c}). The average received SNR is determined as $\mathrm{SNR} = \frac{\bar{w}}{N_0}$ due to the power control and we note that the rate increases for a larger SNR, then decreases after a certain SNR threshold. This effect is due to the fact that every node is transmitting at a same SNR; at high SNR region, the interference also increases as the transmit power increases, degrading the overall network performance. We observe that D2D transmission achieves a higher rate than the cellular link for $\beta = 0.2$. Also, a dominant LOS component (large $\kappa$) or a large number of scattering clusters (large $\mu$) lead to higher average rates, whereas large $\eta$ or a small number of scattering clusters (small $\mu$) decreases the average rates.
 
  \begin{figure}[!t]
\centering
\subfigure[Average rate of D2D mode versus the SNR.]{%
  \label{fig_rate_special_d}%
  \includegraphics[width=\myimagewidth, height = 62 mm]{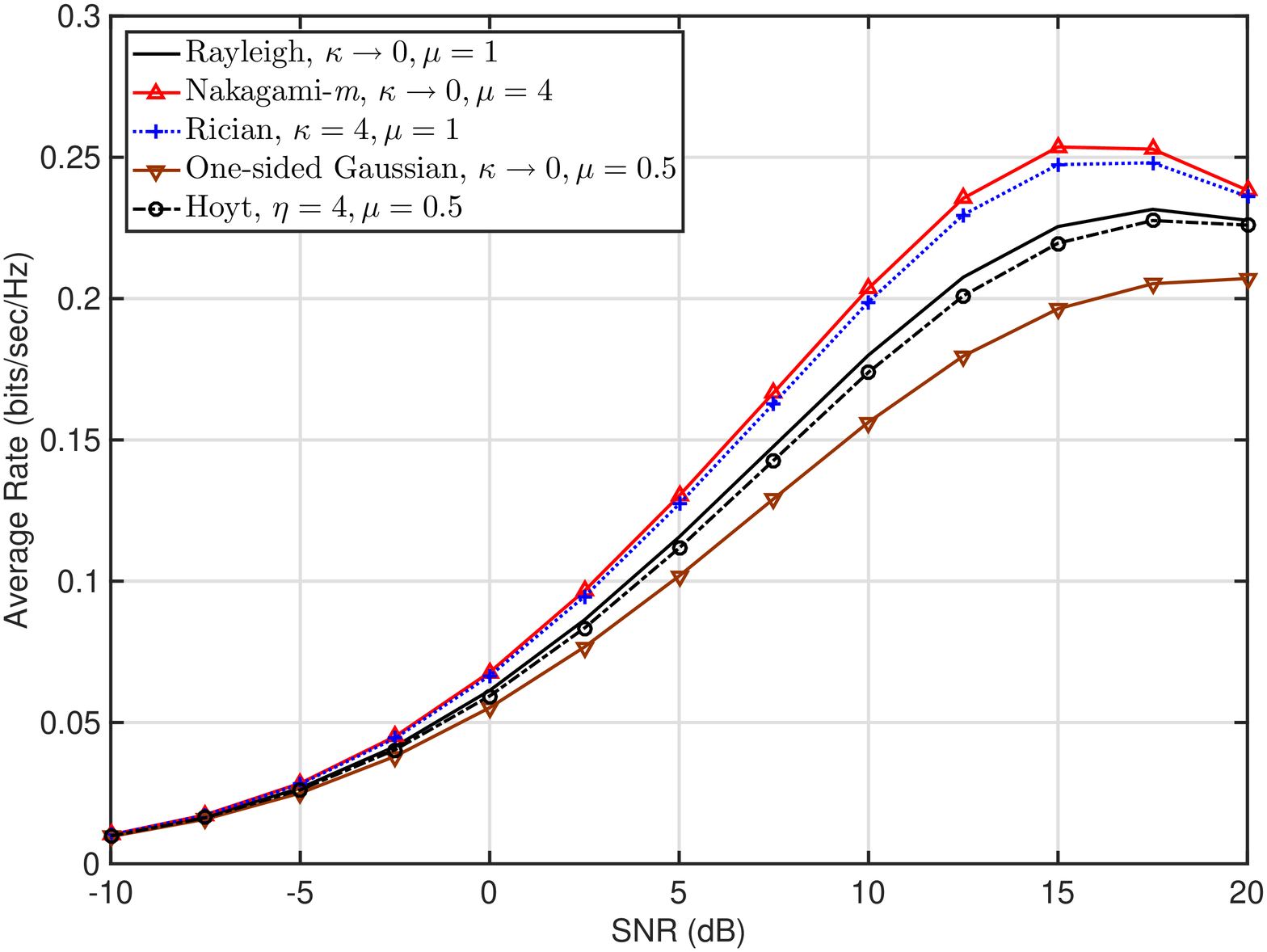}%
}
\subfigure[Average rate of cellular mode versus the SNR.]{%
  \label{fig_rate_special_c}%
  \includegraphics[width=\myimagewidth, height = 62 mm]{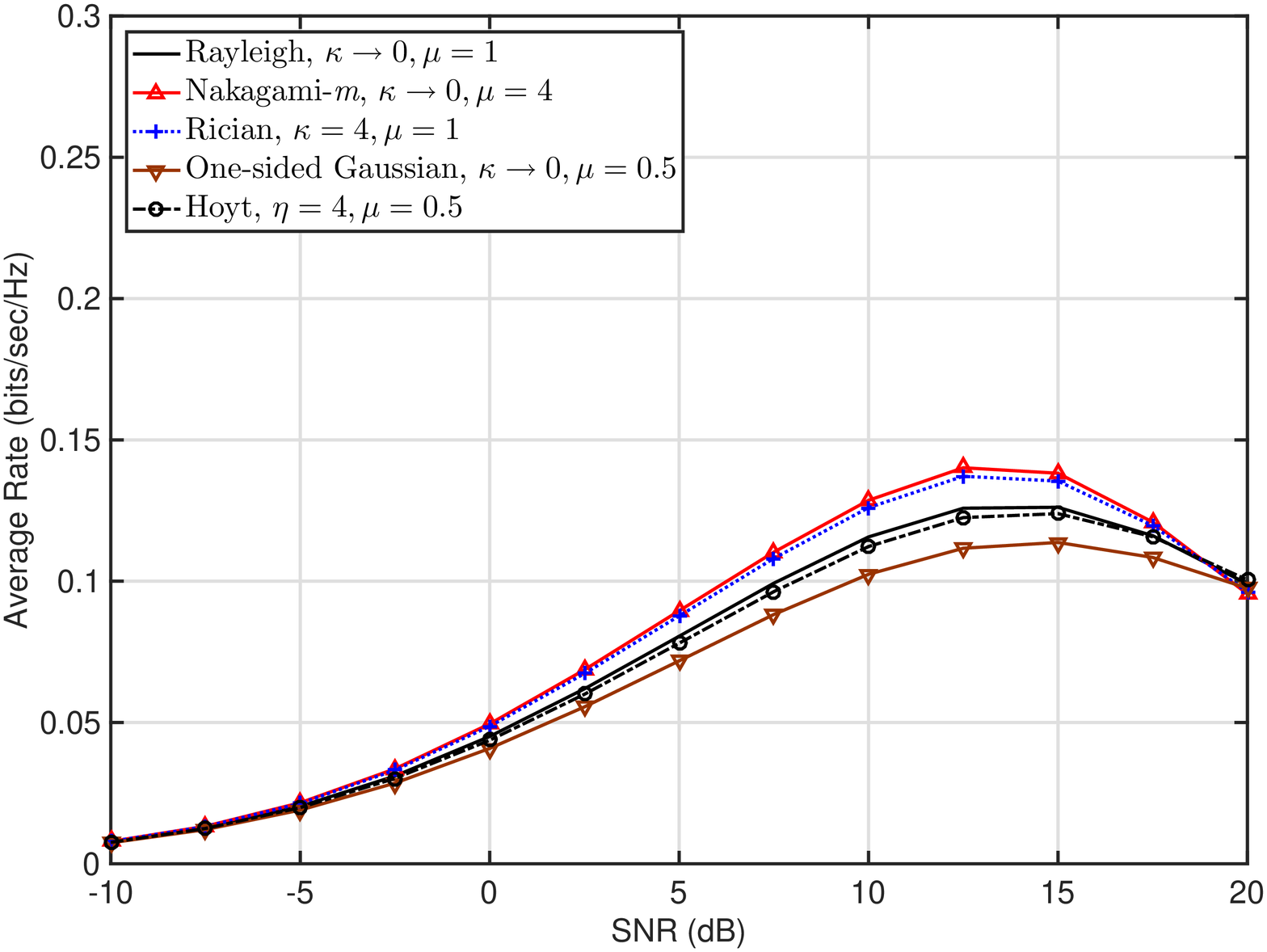}%
}
\subfigure[Complementary CDF of the SIR.]{%
  \label{fig1}%
  \includegraphics[width=\myimagewidth, height = 62 mm]{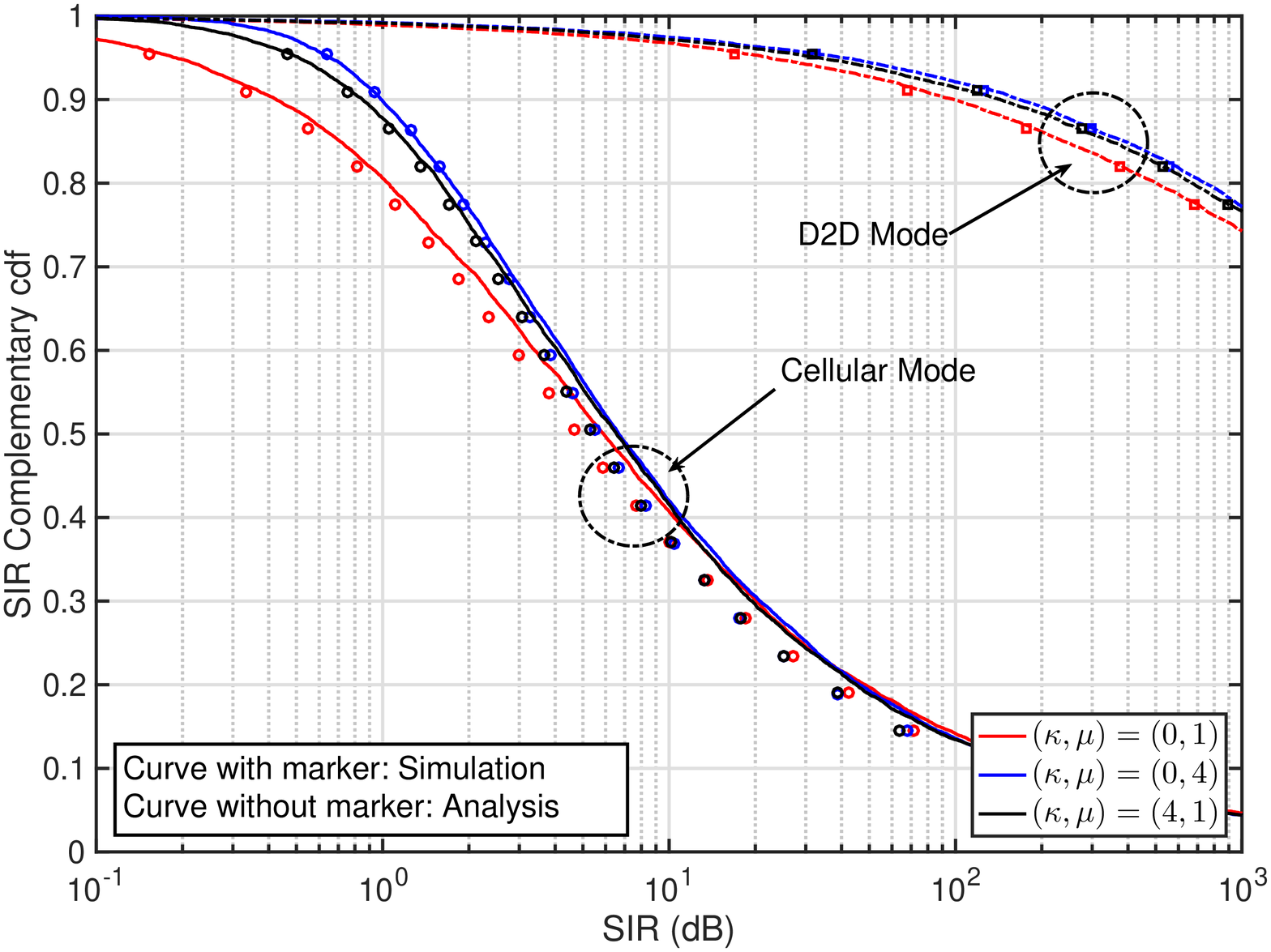}%
}
\subfigure[Average rate versus mode selection threshold $\theta$.]{%
  \label{fig5}%
  \includegraphics[width=\myimagewidth, height = 62 mm]{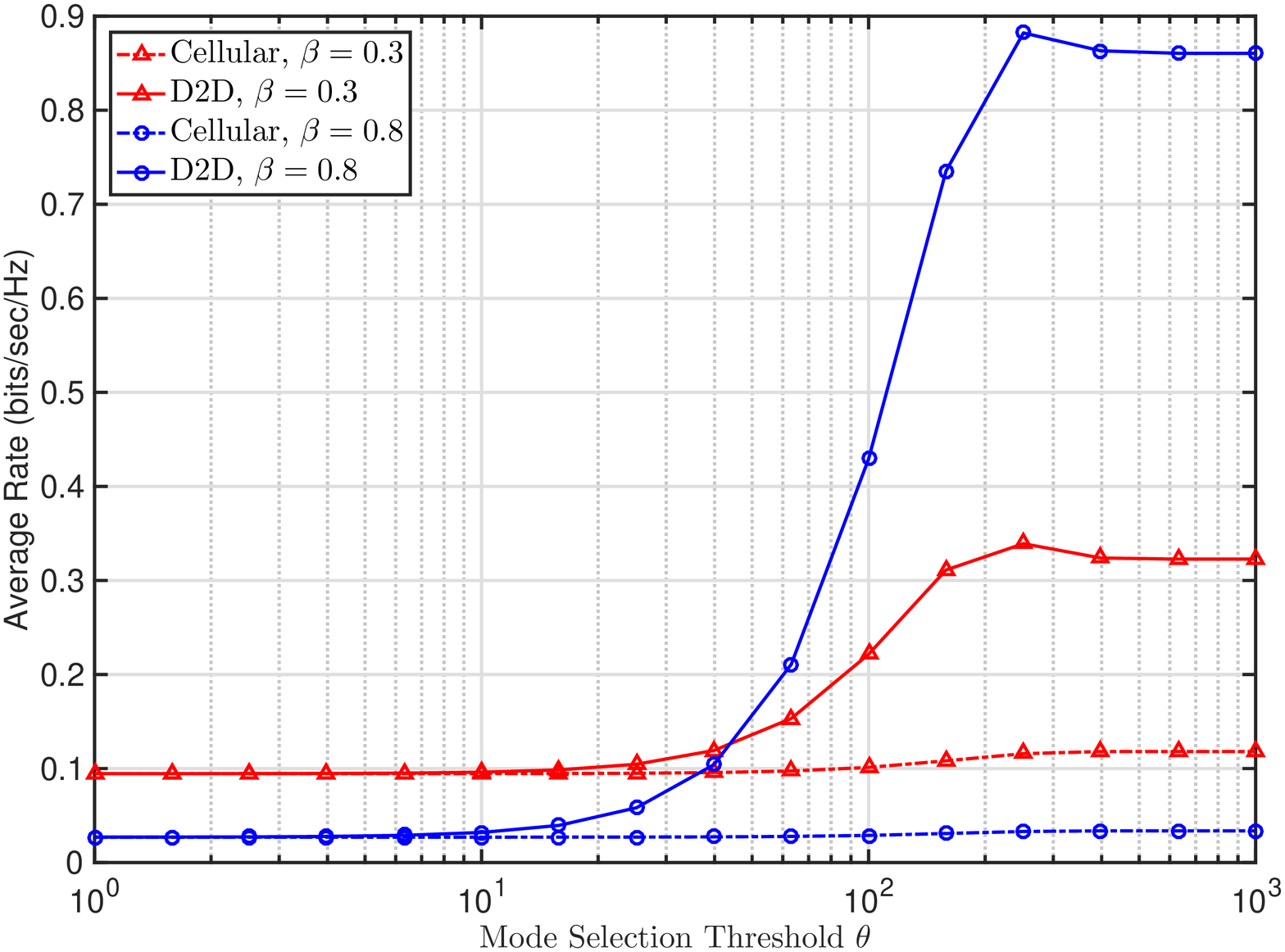}%
}
\subfigure[Average rate versus spectrum partition factor $\beta$.]{%
  \label{fig.rate_beta-1}%
  \includegraphics[width=\myimagewidth, height = 62 mm]{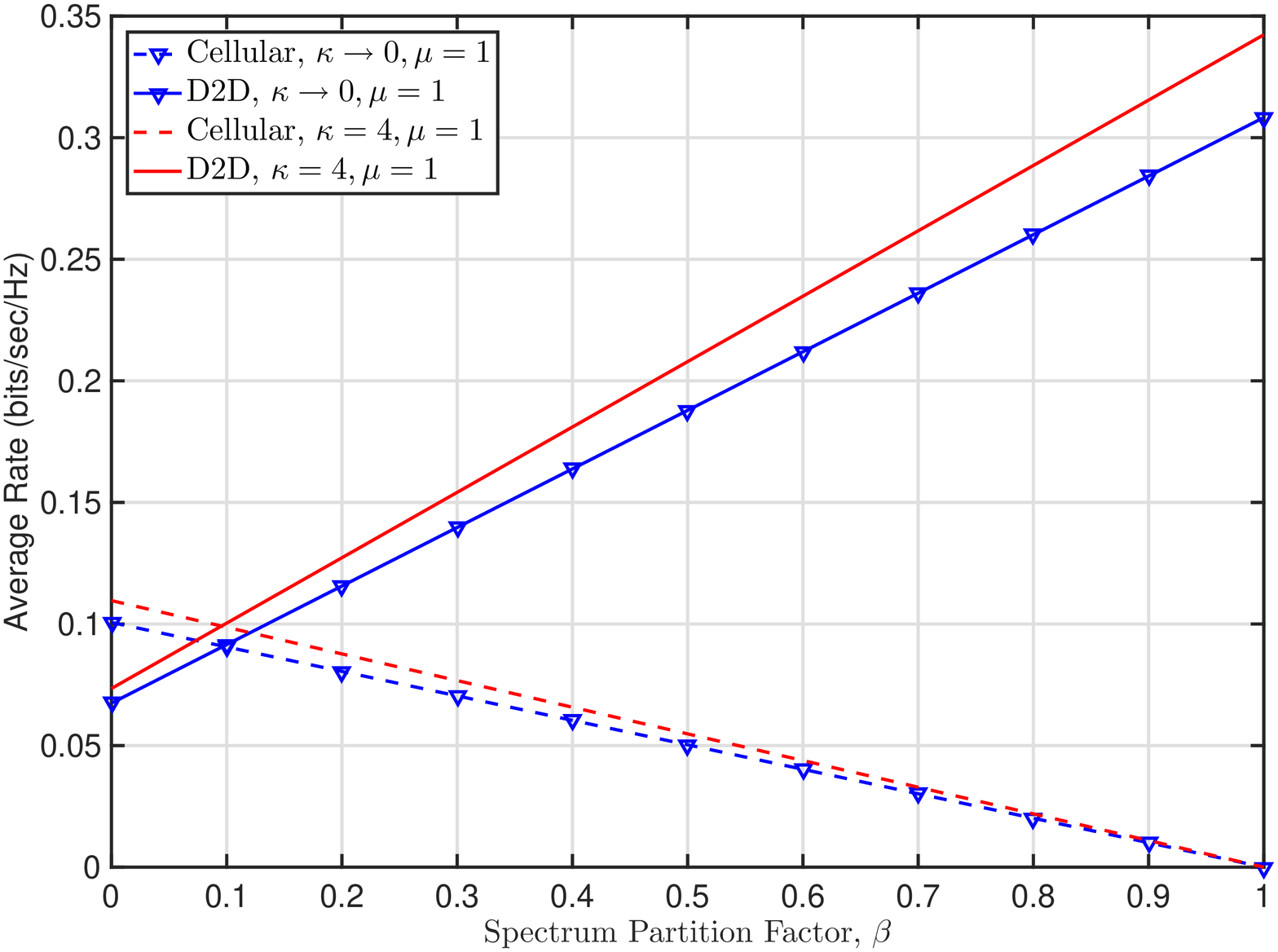}%
}
\subfigure[Average bit error probability versus the SNR.]{%
  \label{fig.BEP_left}%
  \includegraphics[width=\myimagewidth, height = 62 mm]{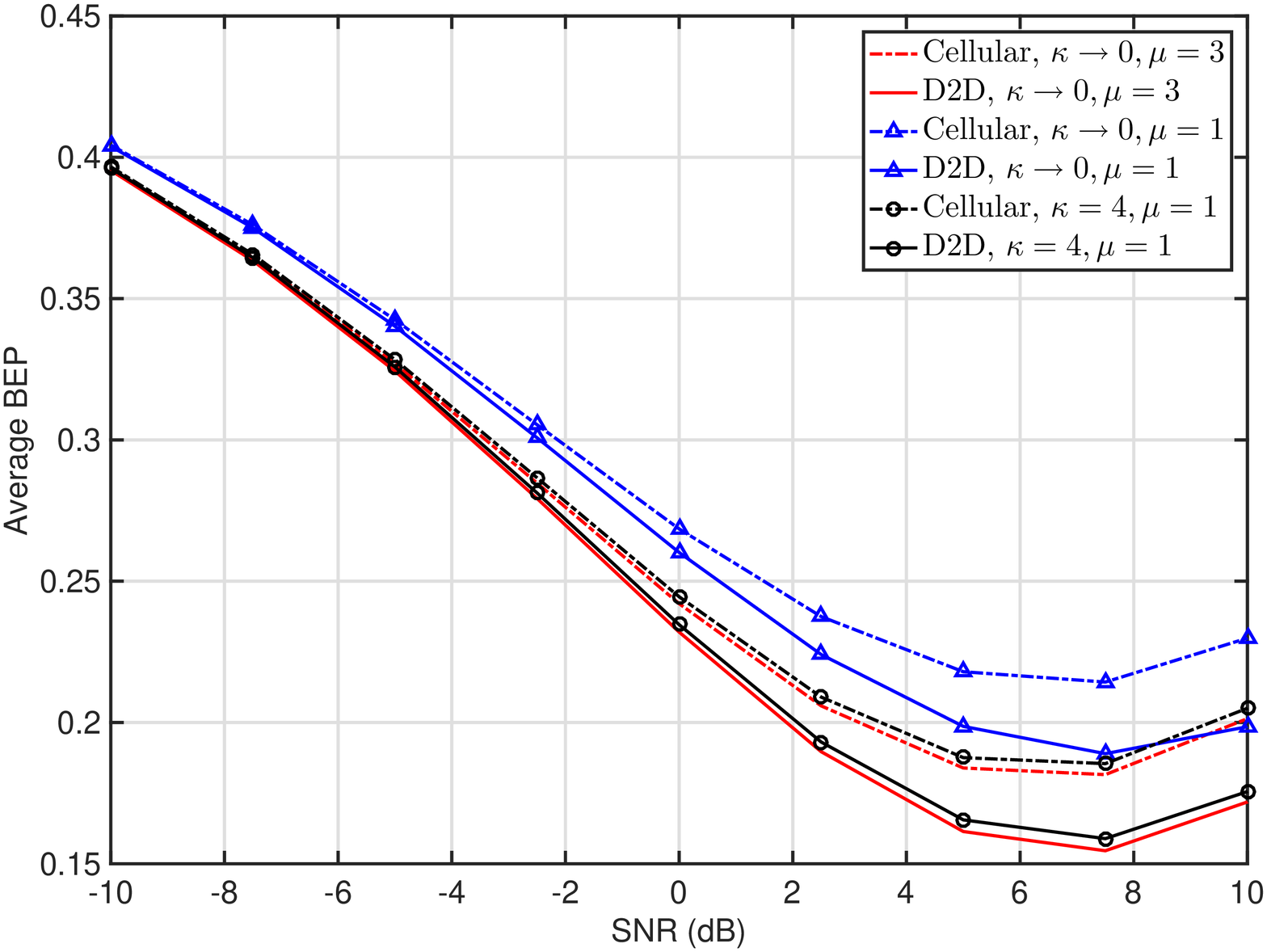}%
}
\caption{Simulation and Numerical Results.}%
\label{fig:Scatter}
\end{figure}

  \figref{fig1} plots the complementary CDF (CCDF) of the SIR for D2D and cellular links versus the SIR for several of the different fading environments contained within the new model proposed here. Monte Carlo simulation results are obtained by adopting a similar approach as \cite{Lin2014} where a hexagonal grid model is assumed for the cellular uplink and D2D nodes are distributed according to a PPP. The analytic closed form expression for the SIR CCDF can be derived either by using the proposed framework in (\ref{eq_secIV-3.001})-(\ref{eq_secIV-3.004}) with an indicator function $g(x) = \mathbb{I}\left( x \geq \gamma_{th}\right)$ or by using the series representation as follows
		\begin{equation}
		   \begin{split} 
			\mathbb{P}\left( \frac{W}{I} > x \right) &= 
			1 - \mathbb{E}_{I}\left[ 
			\sum_{n=0}^{\infty} c_n \left( x ~I \right)^{n + \mu}
			\right]
			= 1 - 	\sum_{n=0}^{\infty} c_n x^{n + \mu} \left. \frac{\partial^{n + \mu} \mathcal{L}_{I}(s)}{\partial s^{n + \mu}} \right|_{s = 0},
		   \end{split}
		   \label{eq_secV-3.001}
		 \end{equation} 
		where (\ref{eq_app_III-008}) is applied to the CDF of $\kappa-\mu$ fading\footnote{The CCDF of the SIR for $\eta-\mu$ fading can be derived similarly.} \cite{Yacoub2007a} in the first equality and Laplace transform property, \textit{i.e.}, $\mathbb{E}\left[ X^{n} \right] = \left.\frac{\partial^{n} \mathcal{L}_{X}(s)}{\partial s^{n}}\right|_{s = 0}$, is utilized in the second equality. The CCDF in (\ref{eq_secV-3.001}) can be evaluated by using $\mathcal{L}_{I}(s)$ in (\ref{eq_secIII-2.003-b}) for D2D link, or (\ref{eq_secIII-3.003-b}) for cellular link, respectively. The analytical results match the simulation results perfectly for D2D mode, whereas for cellular mode, there is a small discrepancy caused by the approximated uplink model. We observe that the performance gap remains tolerable for various fading environments indicating the accuracy of the approximation method in \cite{Lin2014}. On average, D2D links have a closer transmission range than cellular links, which leads to a higher SIR distribution for D2D links than cellular links.

    \figref{fig5} shows the effect of the mode selection threshold $\theta$ on the rate for Rayleigh fading, \textit{i.e.}, $\kappa = 0, \mu = 1$. As shown in the figure, increasing $\theta$ results in less potential D2D UEs choosing to operate in the cellular mode, leading to a higher average rate for the cellular link. We note that the average rate of a potential D2D UE increases for a larger $\theta$, then decreases after a certain optimal $\theta$ value due to the increased co-channel interference over the D2D link.

  As illustrated in \figref{fig_rate_special_d} and \ref{fig_rate_special_c}, D2D transmission achieves a higher rate than the cellular link for $\beta = 0.2$. We compare the average rate versus $\beta$ in \figref{fig.rate_beta-1} for a fixed SNR = $5$ dB when both $W$ and $I$ are $\kappa-\mu$ distributed. On average, D2D links have a closer transmission range than cellular links, hence if a minimum amount of spectrum is allocated to the D2D link, which is $\beta \geq 0.1$ as in \figref{fig.rate_beta-1}, D2D UEs achieve higher transmission rates than cellular UEs. On the other hand, if $\beta < 0.1$, D2D transmission does not have enough radio resources to achieve rate gains against the cellular link.

  \figref{fig.BEP_left} compares the average bit error probability of a D2D link to that of a cellular link. For the purposes of illustration, we assume that both $W$ and $I$ are $\kappa-\mu$ distributed with coherent BFSK ($a = b = 1/2$). A dominant LOS component (large $\kappa$) or a large number of scattering clusters (large $\mu$) improves the bit error probability. Note that the D2D link has worse error probability compared to that of the cellular link. This is attributed to the fact the cellular link uses orthogonal multiple access which eliminates interference within each cell coverage, whereas the D2D link uses random medium access, which improves the cellular link reliability at the cost of reduced rate.

  \section{Conclusion}

  In this paper, we have considered a D2D network overlaid on an uplink cellular network, where the spatial locations of the mobile UEs as well as the BSs are modeled as PPP. In particular, we have introduced a new stochastic geometric approach for evaluating the D2D network performance under the assumption of generalized fading conditions described by the $\kappa-\mu$ and $\eta-\mu$ fading models. Using these methods, we evaluated the average rate and average bit error probability of the overlaid D2D network. Specifically, we observed that the D2D link provides higher rates than those of the cellular link when the spectrum partition factor was appropriately chosen. Under these circumstances, setting a large mode selection threshold will encourage more UEs to use the D2D mode, which increases the average rate at the cost of a higher level of interference and degraded bit error probability. However, for smaller values of the spectrum partition factor, the D2D link has smaller rates than those of the cellular link. In terms of the fading parameters, a dominant LOS component (large $\kappa$) or a large number of scattering clusters (large $\mu$) improve the network performance, \textit{i.e.}, a higher rate and lower BEP are achieved, whereas large $\eta$ or a small number of scattering clusters (small $\mu$) deteriorate the performance. Finally, we also provided numerical results to demonstrate the performance gains of overlaid D2D networks compared to traditional cellular networks, where the latter corresponds to $\beta = 0$ case.




  \section*{Appendix I}

  In this appendix, we summarize operational equalities of the special functions, which are used in this paper.
  First, the Meijer G-function is denoted by \cite{Gradshteyn1994}
          \begin{equation}
          \begin{split}
          &\MeijerG[a, b]{n}{p}{m}{q}{z} =
          G_{p,q}^{m,n}\left( x \Bigg\vert { \mycom{a_p}{b_q} } \right),
          \end{split}
          \label{eq_app_III-000}
          \end{equation}
  and has the following operational identities on multiplication, inverse, integration, and differentiation
          \begin{equation}
          \begin{split}
          \text{Multiplication:} \quad
          x^k G_{p,q}^{m,n}\left( x \Bigg\vert { \mycom{a_p}{b_q} } \right)
          = G_{p,q}^{m,n}\left( x \Bigg\vert { \mycom{a_p + k}{b_q + k} } \right),
          \end{split}
          \label{eq_app_III-001}
          \end{equation}
          \begin{equation}
          \begin{split}
          \text{Inversion:} \quad
          G_{p,q}^{m,n}\left( \frac{1}{x} \Bigg\vert { \mycom{a_p}{b_q} } \right)
          = G_{q,p}^{n,m}\left( x \Bigg\vert { \mycom{1-b_q}{1-a_p} } \right),
          \end{split}
          \label{eq_app_III-002}
          \end{equation}
          \begin{equation}
          \begin{split}
          \text{Integration:} \quad
          \int G_{p,q}^{m,n}\left( x \Bigg\vert { \mycom{a_p}{b_q} } \right) \mathrm{d}x
          = G_{p+1, q+1}^{m, n+1}\left( x \Bigg\vert { \mycom{1, (a_p+1)}{(b_q+1, 0)} } \right),
          \end{split}
          \label{eq_app_III-003}
          \end{equation}
          \begin{equation}
          \begin{split}
          \text{Differentiation:} \quad
          \frac{\partial^{\nu}}{\partial x^{\nu}}
          G_{p,q}^{m,n}\left( \frac{1}{x} \Bigg\vert { \mycom{a_p}{b_q} } \right)
          =
          \frac{1}{x^{\nu}}
          G_{q,p}^{n,m}\left( x \Bigg\vert {
          \mycom{0, a_1, a_2, \ldots, a_n, a_{n+1}, \ldots, a_p}{b_1, b_2, \ldots, b_m, \nu, b_{m+1}, \ldots, b_q} } \right).
          \end{split}
          \label{eq_app_III-004}
          \end{equation}
  The Meijer G-function can represent elementary functions or be simplified as follows
          \begin{equation}
          \begin{split}
          \Gamma(a, b) = G_{1, 2}^{2, 0}\left( b \Bigg\vert { \mycom{1}{0, a} } \right),
          \quad
          G_{1, 1}^{1, 1}\left( x \Bigg\vert { \mycom{a}{b} } \right) =
          \Gamma(1-a+b) x^b \left( x+1 \right)^{a-b-1}.
          \end{split}
          \label{eq_app_III-005}
          \end{equation}
  The following properties of Gamma function hold for non-negative real constants $x$ and $y$
          \begin{equation}
          \begin{split}
          \Gamma(x) \Gamma\left(x + \frac{1}{2}\right) = 2^{1-2x} \sqrt{\pi} \Gamma(2x)&,
          ~
          \binom{x}{y} = \frac{\Gamma(x+1)}{\Gamma(y+1)\Gamma(x-y+1)}, \\
         \Gamma(1+x)\Gamma(1-x) = \frac{1}{\mathrm{sinc}\left( x \right)}&,
         ~ \Gamma\left( \frac{1}{2}\right) = \sqrt{\pi}.
          \end{split}
          \label{eq_app_III-007}
          \end{equation}
The generalized Marcum Q-function can be represented in power series using Laguerre polynomials \cite{Andras2010}
          \begin{equation}
          \begin{split}
			Q_{\mu}\left[ \sqrt{2 a}, 
			\sqrt{2 b ~x} \right] &= 
			1 - \sum_{n=0}^{\infty}c_n x^{n+\mu},
          \end{split}
          \label{eq_app_III-008}
          \end{equation}
for arbitrary $a > 0$, $b > 0$, and $x \geq 0$ where the coefficient $c_n$ denotes the following expression
		\begin{equation}
          \begin{split}
			c_n &\triangleq (-1)^{n}\mathrm{e}^{-a}
			\frac{L_n^{\mu-1}(a) b^{n+\mu}}{\Gamma(n+\mu+1)},
          \end{split}
          \label{eq_app_III-008-2}
          \end{equation}
and $L_n^{\mu-1}$ is the generalized Laguerre polynomial of degree $n$ and order $\mu-1$.

  By using the property of the binomial coefficient, $\binom{n + r}{r} = \binom{n+r}{n} = (-1)^n \binom{-r-1}{n}$,
  and the Taylor series, $(1+x)^{\mu} = \sum_{n=0}^{\infty}\binom{\mu}{n} x^n$,
  the following summation can be simplified as
          \begin{equation}
          \begin{split}
         \sum_{n=0}^{\infty} \binom{n+\mu-1}{n} x^{2n} = \sum_{n=0}^{\infty} (-1)^n \binom{-\mu}{n}  x^{2n}
         = \left( 1 - x^2 \right)^{-\mu},
           \end{split}
          \label{eq_app_III-010}
          \end{equation}
  which holds for arbitrary real constant $-1 < x < 1$ and a complex constant $\mu$.

  \section*{Appendix II}

  In this appendix, we summarize the moments of UE's transmit power, which is proved in \cite[Lemma 1]{Lin2014}, to provide the reader with a better understanding of our derivation process. Note that a potential D2D UE can choose either a D2D mode or a cellular mode based on the mode selection scheme. For notational clarity, we denote the transmit power of the potential D2D UE in D2D mode as $\hat{P}_d$ and that of the potential D2D UE as $P_d$.
  \begin{lemm}
  The $n$-th moments of the transmit power of a cellular UE, a potential D2D UE in D2D mode, and a potential D2D UE are given by
  	  \begin{equation}
  	  \begin{split}
  	    \mathbb{E}\left[ {P_c}^n \right] &= \frac{R^{\tau_c n}}{1 + \frac{\tau_c n}{2}}, \quad
  	    \mathbb{E}\left[ {\hat{P}_d}^n \right] = \frac{ \left( \xi \pi \right)^{-\frac{\tau_d n}{2}} }{\mathrm{P}\left( L_d \leq \theta \right)}
  	    \gamma\left( 1 + \frac{\tau_d n}{2}, \xi \pi \theta^2 \right),\\
  	    \mathbb{E}\left[ {P_d}^n \right] &= \left( \xi \pi \right)^{-\frac{\tau_d n}{2}} \gamma\left( 1 + \frac{\tau_d n}{2}, \xi \pi \mu^2 \right)
  	    + \mathrm{P}\left( L_d > \theta \right) \frac{R^{\tau_c n}}{1 + \frac{\tau_c n}{2}},
  	  \end{split}
  	  \label{eq_secIII-1.003}
  	  \end{equation}
  	  where $n > 0$ has positive real value, $\xi$ is the D2D distance parameter, $R = \sqrt{\frac{1}{\pi \lambda_b}}$ is the cellular range, and $\tau_c$ ($\tau_d$) is the path-loss exponent over cellular (D2D) link. We note that $P_d$ includes the potential D2D UEs in D2D mode as well as the ones in cellular mode.
  \end{lemm}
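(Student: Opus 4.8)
The plan is to reduce each transmit-power moment to a moment of the corresponding link length by invoking the channel-inversion power-control policy $P_i = L_i^{\tau_i}$, and then to integrate against the link-length densities already established in the system model. Concretely, since $P_c = L_c^{\tau_c}$ and $\hat{P}_d = L_d^{\tau_d}$ (in D2D mode), we have $\mathbb{E}[P_c^n] = \mathbb{E}[L_c^{\tau_c n}]$ and $\mathbb{E}[\hat{P}_d^n] = \mathbb{E}[L_d^{\tau_d n} \mid L_d \le \theta]$, so the entire lemma collapses to three elementary integrals.

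First I would handle the cellular UE. Using the uniform-in-disk link density $f_{L_c}(x) = 2x/R^2$ on $[0,R]$ from (\ref{eq_secII-1.003}), the computation $\mathbb{E}[L_c^{\tau_c n}] = \frac{2}{R^2}\int_0^R x^{\tau_c n + 1}\,\mathrm{d}x = \frac{R^{\tau_c n}}{1 + \tau_c n/2}$ delivers the first claim immediately. Next, for the potential D2D UE in D2D mode, the relevant length is $L_d$ conditioned on the mode-selection event $\{L_d \le \theta\}$, so I would integrate against the conditional density $f_{L_d}(x)/\mathrm{P}(L_d \le \theta)$ on $[0,\theta]$, with $f_{L_d}$ from (\ref{eq_secII-1.002}) and $\mathrm{P}(L_d \le \theta)$ from (\ref{eq_secII-3.001-ext1}). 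The one nontrivial step is the substitution $u = \xi\pi x^2$, which turns $\int_0^\theta x^{\tau_d n}\,2\pi\xi x\,\mathrm{e}^{-\xi\pi x^2}\,\mathrm{d}x$ into $(\xi\pi)^{-\tau_d n/2}\int_0^{\xi\pi\theta^2} u^{\tau_d n/2}\mathrm{e}^{-u}\,\mathrm{d}u$; recognizing the remaining integral as the lower incomplete gamma function $\gamma\!\left(1 + \tfrac{\tau_d n}{2},\, \xi\pi\theta^2\right)$ and dividing by $\mathrm{P}(L_d \le \theta)$ yields the second claim.

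Finally, for the unconditioned potential D2D UE, I would apply the law of total expectation over the two mode-selection outcomes: with probability $\mathrm{P}(L_d \le \theta)$ the UE transmits in D2D mode with power $L_d^{\tau_d}$, and with probability $\mathrm{P}(L_d > \theta)$ it falls back to cellular mode with power $L_c^{\tau_c}$, where the cellular length is independent of $L_d$. This gives $\mathbb{E}[P_d^n] = \mathbb{E}\!\left[L_d^{\tau_d n}\,\mathbb{I}(L_d \le \theta)\right] + \mathrm{P}(L_d > \theta)\,\mathbb{E}[L_c^{\tau_c n}]$, in which the first term is precisely the \emph{un-normalized} version of the D2D-mode integral above (hence it carries the incomplete gamma factor $(\xi\pi)^{-\tau_d n/2}\gamma(1+\tfrac{\tau_d n}{2}, \xi\pi\theta^2)$ without the $1/\mathrm{P}(L_d \le \theta)$ denominator), and the second term simply reuses the cellular result.

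I do not anticipate any genuine obstacle here, as all three integrals are standard; the only care required is bookkeeping the conditioning — namely, tracking whether the incomplete gamma term carries the $1/\mathrm{P}(L_d \le \theta)$ normalization (as in the pure D2D-mode moment $\mathbb{E}[\hat{P}_d^n]$) or not (as in the D2D-mode contribution inside the total-expectation decomposition for $P_d$) — and ensuring the upper limit of the incomplete gamma is consistently $\xi\pi\theta^2$ throughout.
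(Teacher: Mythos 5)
Your proposal is correct, and it is essentially the only natural argument: the paper itself gives no inline proof of this lemma (it defers entirely to the citation of Lemma~1 in the Lin--Andrews--Ghosh work), and the proof there proceeds exactly as you do — channel inversion $P_i = L_i^{\tau_i}$ reduces each power moment to a link-length moment, the cellular case is a polynomial integral over the uniform-in-disk density, and the D2D cases follow from the substitution $u = \xi\pi x^2$ producing the lower incomplete gamma function, with the law of total expectation handling the unconditioned $P_d$. One point worth noting: the $\gamma\left(1+\tfrac{\tau_d n}{2}, \xi\pi\mu^2\right)$ appearing in the paper's statement of the lemma is a notational carryover from the cited reference, where $\mu$ denotes the mode-selection threshold; in this paper's notation that argument should read $\xi\pi\theta^2$, so your consistent use of $\theta$ is the correct reading rather than a discrepancy with the statement.
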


  \section*{Appendix III}

  In this appendix, we provide a proof for Lemma $2$.
  For the $\kappa-\mu$ distribution, we obtain (\ref{eq_secII-2.010}) by substituting the second relation of (\ref{eq_app_III-005}) into (\ref{eq_secII-2.003}) as follows
    \begin{equation}
    \begin{split}
      \mathcal{L}_{G}(s) &=
      \frac{1}{\mathrm{e}^{\mu \kappa}} \sum_{n = 0}^{\infty} \frac{\left( \mu \kappa \right)^n}{n!}
      \left( \frac{1}{1 + \frac{s \bar{w}}{\mu (1 + \kappa)}} \right)^{n + \mu}
      = \frac{1}{\left( 1 + \frac{s \bar{w}}{\mu (1 + \kappa)} \right)^{\mu}}
      \exp\left( -\frac{\mu \kappa}{1 + \frac{\mu (1 + \kappa)}{s \bar{w}}}\right),
    \end{split}
    \label{eq_app_II-ext-001}
    \end{equation}
    where we used the Taylor expansion of an exponential function, \textit{i.e.}, $\mathrm{e^{x}} = \sum_{n=0}^{\infty}\frac{x^n}{n!}$,
    in the last equality.
  For the $\eta-\mu$ distribution, we simplify (\ref{eq_secII-2.007}) as follows
    \begin{equation}
    \begin{split}
      \mathcal{L}_{G}(s)
       &= \frac{1}{h^{\mu}}
       \frac{1}{\left( 1 + \frac{s \bar{w}}{2 \mu h} \right)^{2 \mu}}
       \sum_{n = 0}^{\infty}
       \binom{n + \mu - 1}{n}
       \left( \frac{H}{h}\right)^{2n}
       {\left( 1 + \frac{s \bar{w}}{2 \mu h} \right)^{-2 n}}\\
       &= \frac{1}{h^{\mu}}
       \frac{1}{\left( 1 + \frac{s \bar{w}}{2 \mu h} \right)^{2 \mu}}
       \left( 1 - \left( \frac{H/h}{1 + \frac{s \bar{w}}{2 \mu h}}\right)^2 \right)^{-\mu},
    \end{split}
    \label{eq_app_II-ext-002}
    \end{equation}
    where we applied (\ref{eq_app_III-005}) and (\ref{eq_app_III-007}) in the first equality. Since $H$ and $h$ are related by the following expression,
  \begin{equation}
    \begin{split}
    \frac{H}{h} &= \frac{\eta^{-1}-\eta}{2 + \eta^{-1} + \eta} = \frac{1-\eta}{1+\eta} = \frac{2}{1+\eta} - 1, \quad \eta > 0,
    \end{split}
    \label{eq_app_II-ext-003}
    \end{equation}
    $H/h$ is a decreasing function of $\eta$ with a magnitude $-1 < \frac{H}{h} < 1$. Hence,
    $-1 < \frac{H/h}{1 + \frac{s \bar{w}}{2 \mu h}} < 1$ holds for a non-negative valued $\frac{s \bar{w}}{2 \mu h}$ and
    (\ref{eq_app_III-010}) can be applied to the last equality,
    which is equivalent to (\ref{eq_secII-2.010}). This completes the proof.

  \section*{Appendix IV}

  In this appendix, we provide a proof for Lemma $4$.
  The Laplace transform of the interference at the D2D receiver is evaluated as follows
    \begin{equation}
    \begin{split}
      \mathcal{L}_{I_d}(s) &= \mathbb{E}\left[ \mathrm{e}^{-s I_d} \right]
      = \mathbb{E}\left[ \exp\left(-s \sum_{X_j \in \varepsilon \Phi_d \backslash \{ 0 \}}
      \hat{P}_{d, j} G_j {||X_j||}^{-\tau_d} \right) \right]\\
      &= \mathbb{E}\left[ \prod_{X_j \in \varepsilon \Phi_d}
          \mathbb{E}\left[
            \exp\left(-s \hat{P}_{d, j} G_j {||X_j||}^{-\tau_d} \right)
          \right]
        \right]\\
      &= \exp\left( -2 \pi \varepsilon \lambda_d
        \int_{0}^{\infty} \left( 1 - \mathbb{E}\left[
          \exp\left(-s \hat{P}_{d} G {r}^{-\tau_d} \right)
        \right] \right) r\mathrm{d}r
      \right),
    \end{split}
    \label{eq_app_IV-001}
    \end{equation}
  where we have substituted (\ref{eq_secIII-2.001}) into the second equality, used Slivnyak's theorem in the third equality \cite{Haenggi2013},
  applied the probability generating functional (PGFL) of PPP in the last equality \cite{Haenggi2013}.

  By applying a change of variable, \textit{i.e.}, $s \hat{P}_{d} G {r}^{-\tau_d} = t$,
  and integration by parts in the last equality of (\ref{eq_app_IV-001}), the Laplace transform $\mathcal{L}_{I_d}(s)$ simplifies as follows
    \begin{equation}
    \begin{split}
      \mathcal{L}_{I_d}(s) &= \exp\left( - \pi \varepsilon \lambda_d
      \mathbb{E}\left[
      \left( s \hat{P}_{d} G \right)^{\delta_d}
      \int_{0}^{\infty} \delta_d t^{-\delta_d - 1} \left( 1 - \mathrm{e}^{-t} \right) \mathrm{d}t\right] \right) \\
      &=
      \exp\left( - \pi \varepsilon \lambda_d
      s^{\delta_d} \Gamma\left( 1 - \delta_d \right)
      \mathbb{E}\left[ {\hat{P}_{d}}^{\delta_d} \right]
      \mathbb{E}\left[ G^{\delta_d} \right] \right),
    \end{split}
    \label{eq_app_IV-003}
    \end{equation}
  where $\delta_d = \frac{2}{\tau_d}$ is used in the first equality, and
  the Gamma function $\Gamma(t) = \int_{0}^{\infty}x^{t-1} \mathrm{e}^{-x} \mathrm{d}x$ is applied in the second equality.
  Then, by substituting $\lambda_d = q \mathrm{P}(L_d \leq \theta) \lambda$, (\ref{eq_secII-2.003}), (\ref{eq_secIII-1.003}) into (\ref{eq_app_IV-003}),
  the Laplace transform of $I_d$ for the $\kappa-\mu$ distribution is evaluated as
    \begin{equation}
    \begin{split}
      \mathcal{L}_{I_d}(s)
      &=
      \exp\left( -
      \frac{q \varepsilon \lambda s^{\delta_d}}{\xi \mathrm{e}^{\mu \kappa}}
      \left( \frac{\bar{w}}{[(1+\kappa)\mu]} \right)^{\delta_d}
      \frac{
  	\gamma\left( 2, \xi \pi \theta^2 \right)
      \Hypergeometric{1}{1}{\mu+\delta_d}{\mu}{\mu \kappa}}{\mathrm{sinc}\left(\delta_d\right)}
      \binom{\mu+\delta_d-1}{\delta_d} \right),
    \end{split}
    \label{eq_app_IV-004}
    \end{equation}
  where we used (\ref{eq_app_III-007}). Hence, we obtain (\ref{eq_secIII-2.003-a}) for the $\kappa-\mu$ distribution. The Laplace transform of $I_d$ for the $\eta-\mu$ distribution can be proved in a similar manner. This completes the proof.

  \section*{Appendix V}

  In this appendix, we provide a proof for Lemma $5$.
  The Laplace transform of the interference at the cellular BS is evaluated by (\ref{eq_secIII-3.001}) as follows
    \begin{equation}
    \begin{split}
      \mathcal{L}_{I_c}(s) &= \mathbb{E}\left[ \exp\left(-s \sum_{X_j \in \Phi_{c, a} \cap \mathcal{A}^c}
      {P}_{c, j} G_j {||X_j||}^{-\tau_c} \right) \right]\\
      &= \mathbb{E}\left[ \prod_{X_j \in \Phi_{c, a}}
      \exp\left(-s {P}_{c, j} G_j {||X_j||}^{-\tau_c} ~ \mathbb{I}\left( ||X_j|| \geq R \right)  \right) \right]\\
      &= \exp\left( -2 \pi \lambda_b
        \int_{R}^{\infty} \left( 1 - \mathbb{E}\left[
          \exp\left(-s {P}_{c} G {r}^{-\tau_c} \right)
        \right] \right) r\mathrm{d}r
      \right),
    \end{split}
    \label{eq_app_V-001}
    \end{equation}
  where $\mathbb{I}(x)$ is an indicator function, \textit{i.e.}, $\mathbb{I}(x) = 1$ for $x > 0$ and $0$ otherwise,
  $\mathcal{A}^c$ indicates the region outside the cellular coverage, and
  the PGFL of PPP \cite{Haenggi2013} is used in the last equality.

  The average term in the last equality of (\ref{eq_app_V-001}) is denoted by $\varphi(r) \triangleq
  \mathbb{E}\left[ \mathrm{e}^{-s {P}_{c} {r}^{-\tau_c} G} \right]$ where the expectation is derived over the ensemble $P_c$ and $G$.
  For the $\kappa-\mu$ distribution, $\varphi(r)$ is evaluated as follows
    \begin{equation}
    \begin{split}
        \varphi(r) &=
        \mathbb{E}_{P_c, G}\left[ \mathrm{e}^{-s {P}_{c} {r}^{-\tau_c} G} \right]
        =
        \mathbb{E}_{P_c}\left[
          \mathcal{L}_{G}\left( s P_c {r}^{-\tau_c} \right)
        \right]\\
        &=
        \mathbb{E}_{L_c}\left[
          \mathcal{L}_{G}\left( s {L_c}^{\tau_c} {r}^{-\tau_c} \right)
        \right] =
        \int_{0}^{R} \mathcal{L}_{G}\left( s x^{\tau_c} {r}^{-\tau_c} \right) f_{L_c}(x) \mathrm{d}x,
      \end{split}
      \label{eq_app_V-002}
      \end{equation}
  where averaging $\mathrm{e}^{-n G}$ over the intended channel $G$ achieves the Laplace transform of $G$ in the second equality
  and the channel inversion based power control, \textit{i.e.}, $P_c = L_c^{\tau_c}$, is applied in the third equality.
  Then, by substituting (\ref{eq_secII-1.003}) and (\ref{eq_secII-2.007}) into (\ref{eq_app_V-002}),
  $\varphi(r)$ is calculated as below
    \begin{equation}
    \begin{split}
        &\varphi(r) =
        \frac{2\pi \lambda_b}{\mathrm{e}^{\mu \kappa}}
        \sum_{n = 0}^{\infty} \frac{\left( \mu \kappa \right)^n}{n! \Gamma(n+\mu)}
        \int_{0}^{R} x G_{1,1}^{1,1}\left( \frac{\mu (1+\kappa)}{s x^{\tau_c} {r}^{-\tau_c} \bar{w}}
        \Bigg\vert { \mycom{1}{n+\mu} } \right) \mathrm{d}x\\
        &= \frac{2\pi \lambda_b}{\mathrm{e}^{\mu \kappa}}
        \sum_{n = 0}^{\infty} \frac{\left( \mu \kappa \right)^n}{n! \Gamma(n+\mu)}
        \int_{0}^{R} x G_{1,1}^{1,1}\left( \frac{s x^{\tau_c} {r}^{-\tau_c} \bar{w}}{\mu (1+\kappa)}
        \Bigg\vert { \mycom{1-n-\mu}{0} } \right) \mathrm{d}x\\
        &= \frac{\pi \lambda_b r^2 \delta_c}{\mathrm{e}^{\mu \kappa}}
        \left( \frac{\mu (1 + \kappa)}{s \bar{w}} \right)^{\delta_c}
        \sum_{n = 0}^{\infty} \frac{\left( \mu \kappa \right)^n}{n! \Gamma(n+\mu)}
        \int_{0}^{\frac{s \bar{w} R^{\tau_c}}{\mu (1+\kappa) r^{\tau_c}}}
        t^{\delta_c-1} G_{1,1}^{1,1}\left( t
        \Bigg\vert { \mycom{1-n-\mu}{0} } \right) \mathrm{d}t\\
        &= \frac{\pi \lambda_b r^2 \delta_c}{\mathrm{e}^{\mu \kappa}}
        \left( \frac{\mu (1 + \kappa)}{s \bar{w}} \right)^{\delta_c}
        \sum_{n = 0}^{\infty} \frac{\left( \mu \kappa \right)^n}{n! \Gamma(n+\mu)}
        G_{2,2}^{1,2}\left( \frac{s \bar{w} R^{\tau_c}}{\mu (1+\kappa) r^{\tau_c}}
        \Bigg\vert { \mycom{1, 1+\delta_c-n-\mu}{\delta_c, 0} } \right)\\
        &= \frac{\pi \lambda_b r^2 \delta_c}{\mathrm{e}^{\mu \kappa}}
        \left( \frac{\mu (1 + \kappa)}{s \bar{w}} \right)^{\delta_c}
        \sum_{n = 0}^{\infty} \frac{\left( \mu \kappa \right)^n}{n! \Gamma(n+\mu)}
        G_{2,2}^{2,1}\left( \frac{\mu (1+\kappa) r^{\tau_c}}{s \bar{w} R^{\tau_c}}
        \Bigg\vert { \mycom{1 - \delta_c, 1}{0, n+\mu-\delta_c} } \right),
    \end{split}
    \label{eq_app_V-003}
    \end{equation}
  where we have applied (\ref{eq_app_III-002}) in the second and last equality, used a change of variable, \textit{i.e.},
  $\delta_c = \frac{2}{\tau_c}$ and $\frac{s x^{\tau_c} {r}^{-\tau_c} \bar{w}}{\mu (1+\kappa)} = t$, in the third equality,
  and applied (\ref{eq_app_III-003}) in the fourth equality.
  Hence, by substituting $R = {\left( \pi \lambda_b \right)^{-\frac{1}{2}}}$ into (\ref{eq_app_V-003}),
  we obtain (\ref{eq_secIII-3.003-a}) for the $\kappa-\mu$ distribution. The corresponding Laplace transform of $I_c$
  for the $\eta-\mu$ distribution can be obtained in a similar manner. This completes the proof.

  \section*{Appendix VI}

  In this appendix, we provide a proof for Theorem $3$.
  Due to the power control scheme adopted here, the received signal power $W$ is equal to the intended channel, \textit{i.e.}, $W = G_i$, where the PDF of $G_i$ is given by
  (\ref{eq_secII-2.003}). Then, the average of an arbitrary function of the SINR $\gamma = \frac{W}{I+N_0}$ is written as follows
    \begin{equation}
    \begin{split}
      \mathbb{E}\left[ \left. g\left( \frac{W}{I+N_0}\right) \right\vert I \right] &=
      \int_{0}^{\infty} g\left( \frac{x}{I+N_0}\right) f_{W}(x) \mathrm{d}x = \int_{0}^{\infty} g\left( \frac{x}{I+N_0}\right) f_{G_i}(x) \mathrm{d}x.
    \end{split}
    \label{eq_app_VIII-001}
    \end{equation}
  We express the Bessel function in (\ref{eq_secII-2.003}) by the series form and denote the following term as $a_n$
    \begin{equation}
    \begin{split}
    \mathrm{I}_{\nu}(x) = \sum_{n = 0}^{\infty} \frac{1}{n! ~ \Gamma(n+\nu+1)} \left( \frac{x}{2}\right)^{2n + \nu}, \quad
    a_n = \frac{(\mu \kappa)^n}{n! \mathrm{e}^{\mu \kappa}}.
    \end{split}
    \label{eq_app_VIII-002}
    \end{equation}
  Then, (\ref{eq_app_VIII-001}) is given by
    \begin{equation}
    \begin{split}
      \mathbb{E}\left[ \left. g\left( \frac{W}{I+N_0}\right) \right\vert I \right] &=
      \sum_{n = 0}^{\infty} a_n
      \int_{0}^{\infty} \frac{x^{\mu+n-1}}{\Gamma(\mu+n)} g\left( \frac{x}{I+N_0}\right)
      \left( \frac{\mu(1+\kappa)}{\bar{w}} \right)^{\mu + n}
      \mathrm{e}^{-\frac{\mu(1+\kappa)x}{\bar{w}}} \mathrm{d}x\\
      &=     \sum_{n = 0}^{\infty} a_n
      \int_{0}^{\infty} \frac{z^{\mu+n-1}}{\Gamma(\mu+n)} g\left( z \right)
      b^{\mu + n} \mathrm{e}^{-b z} \mathrm{d}x,
    \end{split}
    \label{eq_app_VIII-003}
    \end{equation}
  where we used a change of variable, \textit{i.e.}, $\frac{x}{I+N_0} = z$ and $b = \frac{\mu(1+\kappa)\left( I + N_0\right)}{\bar{w}}$,
  in the second equality.

  The integral in (\ref{eq_app_VIII-003}) can be evaluated as follows
    \begin{equation}
    \begin{split}
    \int_{0}^{\infty} \underbrace{\frac{z^{\mu+n-1}}{\Gamma(\mu+n)} g\left( z \right)}_{u} ~
      \underbrace{
      \vphantom{\frac{z^{\mu+n-1}}{\Gamma(\mu+n)} g\left( z \right)}
      b^{\mu + n} \mathrm{e}^{-b z}}_{v'} \mathrm{d}x
      &=
      \left. -\sum_{i = 0}^{\mu+n-1} g_i(z) b^{\mu+n-i-1} \mathrm{e}^{-bz} \right\vert_{0}^{\infty} + \int_{0}^{\infty}
      g_{\mu+n}(z)\mathrm{e}^{-bz} \mathrm{d}z,
    \end{split}
    \label{eq_app_VIII-004}
    \end{equation}
  where we applied integration by parts $\mu+n$ times, defined $g_i(z)$ in (\ref{eq_secIV-3.002}), and
    \begin{equation}
      g_i(0) =
      \begin{dcases}
      0, & \text{for } i < \mu+n-1\\
      g(0), & \text{for } i = \mu+n-1
      \end{dcases}.
    \label{eq_app_VIII-005}
    \end{equation}
  Then, the average of an arbitrary function of the SINR $\gamma = \frac{W}{I+N_0}$ is given by
    \begin{equation}
    \begin{split}
      \mathbb{E}\left[ g\left( \frac{W}{I+N_0}\right) \right] &=
      \mathbb{E}\left[ \mathbb{E}\left[ \left. g\left( \frac{W}{I+N_0}\right) \right\vert I \right]\right]\\
      &= g(0) + \sum_{n=0}^{\infty} \int_{0}^{\infty}
      a_n g_{\mu+n}\left( z \right)  \mathrm{e}^{-\frac{\mu(1+\kappa) N_0}{\bar{w}}z}
      \mathbb{E}_I\left[ \mathrm{e}^{-\frac{\mu(1+\kappa) I}{\bar{w}}z} \right] \mathrm{d}z\\
      &= g(0) + \frac{\bar{w}}{\mu(1+\kappa) N_0} \sum_{n=0}^{\infty} \int_{0}^{\infty}
      a_n g_{\mu+n}\left( \frac{\bar{w}}{\mu(1+\kappa) N_0}x \right)  \mathrm{e}^{-x}
      \mathcal{L}_I\left( \frac{x}{N_0} \right) \mathrm{d}x,
    \end{split}
    \label{eq_app_VIII-006}
    \end{equation}
  where we used $\sum_{n=0}^{\infty}a_n = 1$ in the first equality and a change of variable $\frac{\mu(1+\kappa) N_0 z}{\bar{w}} = x$
  in the last.

  An analytic function $g(x)$ of $x$ can be computed by using the Laguerre polynomial as follows
     \begin{equation}
    \begin{split}
      \int_{0}^{\infty} \mathrm{e}^{-x} g(x) \mathrm{d}x = \sum_{m=1}^{M} c_m g\left( x_m \right) + R_M,
    \end{split}
    \label{eq_app_VI-003}
    \end{equation}
  where $x_m$ and $c_m$ are the $m$-th abscissa and weight of the $M$-th order Laguerre polynomial, respectively.
  The remainder $R_M$ rapidly converges to zero \cite{Gradshteyn1994}. Hence,
  the series expression follows by using the Laguerre polynomial in (\ref{eq_app_VI-003}) and this completes the proof.

  \section*{Appendix VII}

  In this appendix, we provide a proof for (\ref{eq_secV-1.012}) and (\ref{eq_secV-1.013}).
  Let us consider the case of $\kappa \rightarrow 0$ for $\kappa-\mu$. The PDF and Laplace transform of $G_i$ in (\ref{eq_secII-2.003}) and (\ref{eq_secII-2.010}) can be simplified for $\kappa \rightarrow 0$ as follows
    \begin{equation}
    \begin{split}
       f_{G_i}(x)
      &= \frac{1}{\Gamma(\mu)} \left( \frac{\mu}{\bar{w}} \right)^{\mu} x^{\mu-1}
      \exp\left( -\frac{\mu}{\bar{w}}x\right), \quad
    \mathcal{L}_{G_i}(s) = \frac{1}{\left( 1 + \frac{s \bar{w}}{\mu}\right)^{\mu}},
    \end{split}
    \label{eq_secV-1.007}
    \end{equation}
  where we used the following asymptotic property of the modified Bessel function in $f_{G_i}(x)$ \cite{Gradshteyn1994}
    \begin{equation}
    \begin{split}
      \lim_{z \rightarrow 0} \mathrm{I}_{\mu-1}(z) = \frac{1}{\Gamma(\mu)} \left(\frac{z}{2} \right)^{\mu-1}.
    \end{split}
    \label{eq_secV-1.006}
    \end{equation}
  Then, $c$ and $\varphi(r)$ in the interference of a D2D link (\ref{eq_secIII-2.002}) and a cellular link (\ref{eq_secIII-3.002})
  can be simplified as
    \begin{equation}
    \begin{split}
      c_{\kappa-\mu} &=
      \frac{q \varepsilon \lambda }{\xi} \cdot
      \frac{ \Hypergeometric{1}{1}{\mu+\delta_d}{\mu}{0}}{\mathrm{sinc}\left(\delta_d\right)} \cdot
      \left( \frac{\bar{w}}{\mu} \right)^{\delta_d} \cdot
      \binom{\mu+\delta_d-1}{\delta_d}\\
      &=
      \frac{q \varepsilon \lambda }{\xi \mathrm{sinc}\left(\delta_d\right)} \cdot
      \left( \frac{\bar{w}}{\mu} \right)^{\delta_d} \cdot
      \binom{\mu+\delta_d-1}{\delta_d},
        \end{split}
    \label{eq_secV-1.008}
    \end{equation}
      \begin{equation}
    \begin{split}
        \varphi(r) &=
        \int_{0}^{R} \mathcal{L}_{G}\left( s x^{\tau_c} {r}^{-\tau_c} \right) f_{L_c}(x) \mathrm{d}x
        =
        2 \pi \lambda_b \int_{0}^{R} \frac{x}{\left( 1 + \frac{s x^{\tau_c} {r}^{-\tau_c} \bar{w}}{\mu}\right)^{\mu}} \mathrm{d}x\\
        &=
        \Hypergeometric{2}{1}{\mu, \delta_c}{1 + \delta_c}{-\frac{s R^{\tau_c} {r}^{-\tau_c} \bar{w}}{\mu}},
      \end{split}
      \label{eq_secV-1.009}
      \end{equation}
      where $\delta_d = \frac{2}{\tau_d}$, $\delta_c = \frac{2}{\tau_c}$, and
      we used $\Hypergeometric{1}{1}{a}{b}{0} = 1$ in the second equality of (\ref{eq_secV-1.008}),
      applied (\ref{eq_secII-1.003}) and (\ref{eq_secV-1.007}) in the second equality of (\ref{eq_secV-1.009}),
      and used the following integration in the last expression \cite{Gradshteyn1994}
  \begin{equation}
    \begin{split}
       \int_{0}^{u} \frac{x^{\mu-1}}{\left( 1 + \beta x\right)^{\nu}} \mathrm{d}x =
       \frac{u^{\mu}}{\mu} \Hypergeometric{2}{1}{\nu, \mu}{1 + \mu}{-\beta u}.
      \end{split}
      \label{eq_secV-1.010}
      \end{equation}
  The average $\mathbb{E}\left[ g\left( \frac{W}{I+N_0}\right)\right]$ for $\kappa \rightarrow 0$ is derived as follows
   \begin{equation}
    \begin{split}
    \mathbb{E}\left[ g\left( \frac{W}{I+N_0}\right)\right] &=
    \mathbb{E}_{I}\left[
      \mathbb{E}\left[ \left. g\left( \frac{W}{I+N_0}\right) \right\vert I \right] \right]
      = \mathbb{E}_{I}\left[ \int_{0}^{\infty} g\left( \frac{x}{I+N_0}\right) f_{G_i}(x) \mathrm{d}x \right]\\
      &= g(0) + \frac{\bar{w}}{\mu N_0}
      \int_{0}^{\infty} g_{\mu}\left( \frac{\bar{w} x}{\mu N_0} \right)
      \mathcal{L}_{I}\left( \frac{x}{N_0} \right)
      \mathrm{e}^{-x} \mathrm{d}x\\
      &= g(0) + \frac{\bar{w}}{\mu N_0}
      \sum_{m = 1}^{M} c_m g_{\mu}\left( \frac{\bar{w} x_m}{\mu N_0} \right)
      \mathcal{L}_{I}\left( \frac{x_m}{N_0} \right)  + {R}_M,
    \end{split}
    \label{eq_secV-1.011}
    \end{equation}
  by using the similar procedures as Appendix VII, where $g_{\mu}(z) = \frac{1}{z}\left( 1 - \frac{1}{(1+z)^{\mu}}\right)$, and
  $c_m$, $x_m$, and $R_M$ are the $m$-th weight, abscissa, and remainder of the $M$-th order Laguerre polynomial, respectively.

  \bibliographystyle{IEEEtran}
  \bibliography{bib1}


  \end{document}